\newcommand{\exclude}[1]{}
\newcommand{\bmX}{\bm{X}}
\newcommand{\bmY}{\bm{Y}}
\newcommand{\bmx}{\bm{x}} 
\newcommand{\bmr}{\bm{r}} 
\newcommand{\ind}{\mathds{1}}
\newcommand{\E}{\mathbf{E}}
\renewcommand{\Pr}{\mathbf{P}}
\newcommand{\calF}{\mathcal{F}}
\newcommand{\calL}{\mathcal{L}}
\newcommand{\calR}{\mathcal{R}}
\newcommand{\bbN}{\mathbb{N}}
\newcommand{\md}{\text{d}}
\newtheorem{Th}{Theorem}[section]
\newtheorem{prop}[Th]{Proposition}
\newenvironment{Prop}{\bf\begin{prop}\rm\em}{\end{prop}} 
\newtheorem{cor}[Th]{Corollary}
\newtheorem{res}[Th]{Result}
\newtheorem{lemma}[Th]{Lemma}
\newenvironment{Lemma}{\bf\begin{lemma}\rm\em}{\end{lemma}} 
\newtheorem{fact}[Th]{Fact}
\newtheorem{exe}[Th]{Example}
\newtheorem{remark}[Th]{Remark}
\newenvironment{Remark}{\bf\begin{remark}\rm}{\end{remark}} 
\begin{document}
\title{Quality of Real-Time Streaming
in Wireless\\[-1ex] Cellular Networks\\
\Large Stochastic Modeling and Analysis}

\author{Bart{\l }omiej~B{\l }aszczyszyn, Miodrag Jovanovic \\and  Mohamed Kadhem Karray,~\IEEEmembership{Member,~IEEE,}
\thanks{B. B{\l}aszczyszyn is with Inria-ENS, 
23 Avenue d'Italie, 75214 Paris, France; email: Bartek.Blaszczyszyn@ens.fr}
\thanks{M. Jovanovic and M.~K.~Karray are with  Orange Labs, 38/40 rue
  G\'{e}n\'{e}ral Leclerc,  92794  Issy-les-Moulineaux, France; email: 
$\{$miodrag.jovanovic, mohamed.karray$\}$@orange.com}
\thanks{This paper reports the results of the research undertaken  
under CRE-CIFRE thesis co-advising agreement  between Inria  and
Orange Labs.}}


\maketitle
\thispagestyle{empty}

\begin{abstract}
We present a new stochastic service model with capacity sharing and interruptions, appropriate for the evaluation of the quality of real-time streaming (e.g. mobile TV) in wireless cellular networks.  It takes into account multi-class Markovian process of call arrivals (to capture different radio channel conditions, requested streaming bit-rates and call-durations) and allows for a general resource allocation policy saying which users are temporarily denied the requested fixed streaming bit-rates (put in outage) due to resource constraints.  We develop general expressions for the performance characteristics of this model, including the mean outage duration and the mean number of outage incidents for a typical user of a given class, involving only the steady-state of the traffic demand.  We propose also a natural class of least-effort-served-first resource allocation policies, which cope with optimality and fairness issues known in wireless networks, and whose performance metrics can be easily calculated using Fourier analysis of Poisson variables.  We specify and use our model to analyze the quality of real time streaming in 3GPP Long Term Evolution (LTE) cellular networks. Our results can be used for the dimensioning of these networks.
\end{abstract}

\begin{IEEEkeywords}
Real-time streaming, stochastic model, mobile TV,  LTE, quality of service,
interruptions, outage, deep outage, 
capacity-sharing, Poisson process
\end{IEEEkeywords}

\maketitle

\section{Introduction}
\label{s.Intro}
Wireless cellular networks offer nowadays 
possibility to watch TV on mobile devices, which is an example of
a real-time content streaming.
This type of traffic demand is expected to increase
significantly in the future.  In order to cope with this process, 
network operators need to implement in their dimensioning tools
efficient methods allowing to predict the quality of this type of service.
The quality of real-time streaming (RTS) is principally related
to the number and duration 
of {\em outage incidents} ---  (hopefully short)
periods when the network cannot deliver 
to a given user in real-time the requested content of the required quality.
In this paper we propose a stochastic model
allowing for an analytic evaluation of such  metrics. 
It assumes a traffic demand with different radio
conditions of calls,
and can  be specified to take into account the parameters
of a given wireless cellular technology.
We develop expressions for several important performance characteristics 
of this model, including the  mean time 
spent in outage and the  mean number of outage incidents  for a typical
streaming call in function of its  radio conditions.
These expressions involve
only stationary probabilities of the (free) traffic demand process,
which is a  vector of independent Poisson random
variables describing the number of users in different radio conditions.

We use this model 
to analyze RTS in a typical cell of a 3GPP Long Term Evolution
(LTE)  cellular network assuming 
orthogonal intra-cell user channels  with the peak bit-rates 
(achievable when there are no other users in the same cell)
close to the theoretical  Shannon's bound in the additive
white Gaussian noise (AWGN) channel, with the extra-cell interference
treated as noise.
These  assumptions lead to a radio resource constraint 
in a multi-rate linear form.
Namely, each user experiencing  a given
signal-to-(extra-cell)-interference-and-noise  ratio  (SINR) 
requires a fixed fraction of the 
normalized radio capacity,
related to the ratio between its  requested and peak bit-rates.
All users of a given configuration (experiencing 
different SINR values) can be entirely satisfied if and only if
the total required capacity is not larger than one.
\footnote{Recall that in the case of voice calls and, more generally, constant
bit-rate (CBR) calls 
the multi-rate linear form of the resource constraints 
has already proved to lead to efficient model evaluation methods, 
via e.g. Kaufman-Roberts algorithm~\cite{Kaufman1981,Roberts1981}.
Despite some fundamental similarities to CBR service,
the RTS gives rise to a new model, due to the fact that  the service denials 
are not definitive for a given
call, but have a form of temporal  interruptions (outage) periods.}

In the above context of a multi-rate linear radio resource constraint,
we  analyse some natural parametric class of {\em
least-effort-served-first} (LESF) service policies, which  assign service 
to users in order of their increasing radio capacity demand, until the full 
capacity (possibly with some margin) is reached. The capacity margin
may be used to offer some ``lower quality'' service to users
temporarily in outage
thus realizing  some type of fairness with respect to 
unequal user radio-channel conditions.
This class contains an optimal and a fair
policy, the latter being suggested by  LTE implementations.

In order to evaluate explicitly the quality of service metrics induced by the  LESF
policies we relate the 
mean time spent
in outage and the mean number of outage incidents for a typical streaming call in given 
radio conditions
to the distribution functions of some linear functionals
of the Poisson vector describing   the steady state of the system.
We calculate the Fourier transforms of these functions   
and use  a well-known  Fourier transform 
inversion  method 
to obtain numerical values of the quantities of interest.
We also study the mean throughput  during a typical streaming call 
evaluating the expectations of the corresponding non-linear functionals of the 
Poisson vector  describing   the steady state of the system via the Monte Carlo method.

Using this approach, we  present a thorough study of the quality of 
RTS with LESF policies in the aforementioned Markovian setting. 
For completeness we present also some pure-simulation results
illustrating  the impact of a non Poisson-arrival assumption.

\bigskip

Let us now recollect a few {\em related works} on the performance
evaluation of cellular networks.
In early 80's, wireless cellular networks were carrying
essentially voice calls, which  require 
constant bit-rates (CBR) and are  subject to 
admission control policies with blocking (at the arrival epoch)  to
guarantee these rates for calls already in
service. An important amount of
work has been done to propose efficient call 
admission policies~\cite{Zander1992Dist,Yates1995AFramework,Sampath1995}. 
Policies with admission conditions in the multi-rate linear form have been
considered
e.g. in~\cite{BaccelliBlaszczyszynTournois2003,ElayoubiHaddadaFourestie2008,Karray2010Analytical}.

Progressively, cellular networks started carrying  also calls with variable
bit-rates (VBR), used to transmit data files. The
available resources are (fairly) shared between such calls
and when the traffic demand increases, the file transfer delays 
increase as well,  but (in principle) no call is ever blocked.
These delays may be evaluated analytically
using multi-rate linear resource constraint 
in conjunction with  multi-class processor sharing
models; cf e.g.~\cite{BonaldProutiere2003,Karray2010Analytical}.

Recently, users may access multimedia streaming services through their
mobile devices~\cite{Fitzek2004}. They are provided via
CBR connections, essentially without admission control,
but they tolerate temporary interruptions, 
when network congestions occur.
One  may distinguish two types of
streaming traffic. In {\em real-time streaming (RTS)} (as e.g. in  mobile
TV), considered in this paper, the portions of  the streaming content
emitted during the time when the transmission to a  given user is 
interrupted (is in outage) are definitely lost for him (unless a
``secondary'', lower-rate streaming is provided during these periods).
In {\em non-real-time streaming (NRTS)} (like e.g., video-on-demand,
YouTube, Dailymotion, etc), a user starts playing back
the requested multimedia content after some initial delay, required to deliver 
and buffer on the user device 
some initial portion of  it. If further transmission is
interrupted for some time making the user buffer content drop to zero
(buffer starvation) 
then the play-back is stopped until some new required portion of the
content is delivered.  
Several papers study the effect of the variability
of the wireless channel on the performance of a single streaming call; see for
e.g.~\cite{Liang2008}, \cite{Medard2010}. 
In~\cite{Elayoubi11} VBR transmissions and RTS are considered
jointly in some analytical model, however the number and duration of outage
periods are not evaluated.
In \cite{Altman-etal11} the tradeoff between the start-up
delay and the probability of buffer starvation is analyzed 
in a Markovian queuing framework for NRTS streaming.

We do not consider any cell-load balancing; see~\cite{bethanabhotla2013joint} for some recent work on this problem in the video streaming context.
Also, \cite{li1998analysis,wu2002dynamic} consider some 
admission control policies to guarantee
non-dropping of multimedia calls due to  caller impatience and/or  handoffs.



{\em The remaining part of this paper} is organized as follows.
In Section~\ref{s.wireless} we will present our model for the 
evaluation of the quality of RTS  in wireless
cellular networks.
Technical proofs of the results presented in this section are postponed
to the Appendix, where they are given in a more general context.
Section~\ref{s.numerical} 
specifies our model to be compliant with the LTE cellular networks
specification and  presents numerical
results regarding the quality of RTS in these networks.

\section{Streaming in wireless cellular networks}
\label{s.wireless}
In this section we  present a new stochastic model  
of RTS in cellular networks. 

\subsection{System  assumptions}
We consider the following scenario of multi-user streaming in 
a cellular network.
\subsubsection{Network layer}
Geographically distributed users 
wish to obtain down-link wireless streaming of some (typically video)
content, contacting base stations of a network at  random times, for
random durations, 
requesting some fixed streaming bit-rates. 
We consider a uni-cast traffic (as opposed to the broadcast or
multi-cast case),  i.e.; the content is delivered to all
users via private connections.
Different classes of users
(calls) need to be distinguished, 
regarding their radio channel conditions,
requested  streaming bit-rates and mean streaming times.
Each user chooses one base station, the one  with the smallest
  path-loss,  independently of the
  configuration of users served by this station.
Thus, we do not consider any load-balancing policy.

\subsubsection{Data layer --- streaming policies}
If a given base station cannot serve all the users present 
at a given time, it temporarily stops streaming the requested 
content at the requested rate to users of some classes, according to
some given policy  (to be described), which is supposed to  preserve a
maximal subset of served users. We call these (classes of) users with the requested bit-rate
temporarily  denied {\em in outage}. The users in outage 
will not receive the part of the content which is emitted 
during their outage times (this is the principle of the RTS).
We will also consider policies,
which offer some ``best-effort'' streaming bit-rates for some classes
of users in outage, thus allowing for example to keep receiving  the
requested content but of  a lower  quality. Users, which are
(temporarily) denied even this lower quality of service 
are called {\em in deep
  outage}.
\subsubsection{Medium access } In this paper we assume that users are
connected to the serving antennas 
via  orthogonal single-input-single-output (SISO) channels allowing for 
the peak-rate close to the theoretical  Shannon's bound in the additive
white Gaussian noise (AWGN) model, with the (extra-cell) interference treated as noise.%
\footnote{Orthogonality  of channels  is an appropriate assumption 
for current LTE (Long Term Evolution) norm for cellular networks based
on OFDMA, as well as for other multiple access techniques as FDA, TDMA, CDMA assuming perfect in-cell orthogonality, and even HDR neglecting the
scheduler gain.} We will also comment on how to model 
multiple-input-multiple-output (MIMO) and broadcast channels. 

\subsubsection{Physical layer} 
\label{sss.PhysicalL}
The quality of channel of a given user depends on  the
path-loss of the signal with respect to its serving base station,
a constant noise, and the interference from other (non-serving)
base stations. These three components determine its
signal-to-interference-and-noise ratio (SINR).
Both path-loss form the serving station and interference
account for the distance and random propagation effects (shadowing).
Our main motivation for considering  a multi-class model is 
to distinguish users with different SINR values.
In other words, even if we assume that all users require the same 
streaming times and  rates,
we  still  need a multi-class model due to
(typically) different SINR's values of users in 
wireless cellular networks.

\subsubsection{Performance characteristics}
We will present and analytically evaluate performance of 
some (realistic) streaming policies in the context described above. 
We will be particularly interested in the following characteristics:
\begin{itemize}
\item  fraction of  time spent in outage and in deep outage during the typical
call  of a given class,
\item number of outage incidents occurring during this call, 
\item mean throughput (average bit-rate) during such call, accounting
  for the requested bit-rates and for the ``best-effort'' bit-rate
  obtained during the outage periods.
\end{itemize}

\subsection{Model description} 
In what follows we describe a mathematical model
of the RTS that is an incarnation of a new, more general, 
stochastic service model with capacity sharing and interruptions presented and analyzed
in the Appendix~\ref{s.model}. This is a single server model which allows to study  
the performance of one tagged base station  of a multi-cellular network satisfying the above system  assumptions.
More details on how this model fits the multi-cell scenario will be presented  in Section~\ref{s.numerical}.

\subsubsection{Traffic demand}

\label{s.UsersLocations-Erlang}
Consider $J\ge 1$ {\em classes of calls} (or, equivalently,  users)
characterized  by different
{\em requested streaming bit-rates} $r_k$, {\em wireless channel
  conditions} described by the signal-to-(extra-cell)-interference-and-noise ratio
$\text{SINR}_k$ with respect to the serving base-station~\footnote{In this paper
the interference is always  caused only by non-serving base stations.}
and  {\em mean requested streaming times} $1/\mu_k$, $k=1,\ldots,J$.

We assume that calls of class
$k\in\{1,\ldots,J\}$ arrive in time according
to a Poisson process 
with intensity $\lambda_{k}>0$ (number of call arrivals per unit of time, per
base station) 
and stay in the system (keep requesting streaming) for independent  
times, having some {\em general distribution} with mean 
$1/\mu_{k}<\infty$.~\footnote{All the results presented in this paper 
do not depend on the particular choice of the streaming time
distributions. This property is often referred to in the queuing
context as the insensitivity property.}  
Different classes of calls are independent from each other.
We denote by $X_k(t)$ the number of calls of a given class 
requesting streaming from a given BS at time $t$;
see Section~\ref{s.UsersLocations} in the Appendix for a formal definitions of
these variables in terms of arrival process and service times.
Let $\bmX(t)=(X_1(t),\ldots,X_J(t))$; we call it the (vector of) user
configuration at time $t$. The stationary distribution $\pi$ of
$\bmX(t)$ coincides with  
the distribution of the vector $(X_1,\dots,X_J)$
of independent Poisson random variables
with means $\E[X_k]:=\rho_{k}=\lambda_{k}/\mu_{k}$, 
$k=1,2,\ldots,J$.
We call $\rho_k$ the  \emph{traffic demand} (per base station) of class $k$.

\subsubsection{Wireless resource constraints}
\label{ss.ressource-policy}
Users are supposed to be offered the requested streaming rates
for the whole requested streaming times. 
However, due to limited wireless  resources,  for some
configuration of users $\bmX(t)$, the requested streaming rates
$\bmr=(r_1,\ldots,r_J)$
may be not achievable. 
Following the assumption of orthogonal AWGN SISO
wireless channels (with the (extra-cell) interference treated as noise) 
available for
users of a given station, we assume that 
the requested rates  are achievable for all calls
present at time $t$ if  
\begin{equation}\label{e.orthogonality}
X_k(t)r_k=\nu_k r^{\max}_k, \quad k=1,\ldots,J,
\end{equation}
for some  non-negative
vector $(\nu_1,\ldots,\nu_J)$, such that
$\sum_{k=1}^J\nu_k\le 1$, where 
\begin{equation}\label{e.phi-snr}
r^{\max}_k=\gamma W\log(1+\text{SINR}_k)
\end{equation}
is the maximal (peak) bit-rate of a user of class
$k$, whose channel conditions are characterized by  $\text{SINR}_k$.
(The rate $r^{\max}_k$ is available to a user of class $k$
if it is the only user served by the base station.)
Here  $W$ is the
frequency bandwidth and  $\gamma$ (with $0<\gamma\le 1$) 
is a coefficient telling how close a given coding scheme approaches
the theoretical Shannon's  bound 
(corresponding to $\gamma=1$);
cf~\cite[Th~.9.1.1]{CoverThomas2006}.  
\footnote{It was also shown in~\cite{KJ2012}  that the performance of 
AWGN {\em multiple input multiple output} (MIMO) channel can be
approximated by taking values of $\gamma\ge1$. 
Another possibility to consider MIMO channel 
is to use the exact capacity  formula  given
in~\cite{T99}.}
Note that the assumption~(\ref{e.orthogonality})
corresponds to the situation, 
when  users neither hamper nor assist  each other's transmission. 
They use channels which are perfectly separated in time, frequency 
or by orthogonal codes, nevertheless sharing these resources.
\footnote{From information theory point of view, the orthogonality assumption  
is not optimal. In fact, the theoretically optimal performance is
offered by the {\em broadcast channel} model.
It is known that in the case of AWGN broadcast channel
the  rates $\bmr$ are (theoretically) achievable for the
configuration $\bmX$ 
if (and only if) there exists a vector 
$(\nu_1,\ldots,\nu_J)$, such that $\sum_{k=1}^J\nu_k\le 1$
and  
$$X_kr_k=W\log\biggl(1+\frac{\nu_k}{1/\text{SINR}_k
+\sum_{i=1}^{k-1}\nu_i}\biggr)\,\quad k=1,\ldots,J,$$
where the classes of users are numbered such that
$\text{SINR}_1\ge \text{SINR}_2\ge\ldots\ge \text{SINR}_J$;
cf~\cite[Eq.~6.29]{TseViswanath2005}. 
}

We can interpret the ratio between the requested and maximal bit-rates
 $\varphi_k=r_k/r^{\max}_k$ 
as the {\em resource demand} of a user of class $k$.
Note that the configuration of users $\bmX(t)$ can be entirely served 
if and only if the total resource demand satisfies the constraint
\begin{equation}\label{e.MErlang}
\sum_{k=1}^J\varphi_k
X_k(t)\le 1\,.
\end{equation}
This is a {\em multi-rate linear resource constraint}.

\subsubsection{Service policy}
\label{sss.Service-policy}
If the requested streaming rates are not achievable for 
a given configuration of users $\bmX(t)$ present at time $t$, 
then some classes of users  will be temporarily 
put in outage at time $t$, 
meaning that they will receive some smaller bit-rates (whose values
are not guaranteed and may depend on the configuration   $\bmX(t)$).
These smaller, ``best-effort''  bit-rates may drop to 0, in which case we
say that users  are in deep-outage.
Let us recall that the times at which users are in outage
and deep outage do {\em not} alter the original streaming times;
i.e. the streaming   
content is not buffered, nor delayed during the outage periods.

We will define now  a parametric family of service
polices for which  
{\em classes with smaller resource demands  have higher
service priority}. 
In this regard, in the remaining part of the paper
we assume (without  loss of generality) that the
resource demands of users from different classes 
are  ordered $\varphi_1<\varphi_2<\ldots<\varphi_J$.

\paragraph{Least-effort-served-first policy}
For a given configuration of users $\bmX=\bmX(t)$ requesting streaming
at time $t$,  {\em least-effort-served-first policy with $\delta
$-margin} ({\em LESF($\delta$)} for short)
attributes the requested bit-rates to all users in classes
$k=1,\ldots,K$, where 
\begin{equation}\label{e.LESF}
K=K^\delta(\bmX)=
\max\left\{k\in\{1,\ldots,J\}:
\sum_{j=1}^{k-1}\varphi_{j}X_j+\varphi_k\sum_{j=k}^JX_j\ind(\varphi_j\le \varphi_k(1+\delta))%
\leq1\right\}\,,
\end{equation}
where $\ind_A(x)=1$ is the indicator function of set $A$ 
and  $\delta$ is a constant satisfying $0\le\delta\le\infty$.

\begin{Remark}\label{r.LESF}
The LESF$(0)$ policy is  {\em optimal} in the following sense:
given  constraint~(\ref{e.MErlang})  and the  assumption 
that the  classes with smaller resource demands  have higher
priority, this policy allows to serve the maximal subset of users
present in the system. For the same reason any  LESF$(\delta)$ policy with
$\delta>0$ is clearly sub-optimal. 
In order to explain the motivation  for considering such policies, 
one needs to extend the model and 
explain what actually happens with  
classes of users which experience outage.
In this regard, note that
$C=\sum_{j=1}^{K} \varphi_j X_j\le 1$ is the
actual fraction of the server capacity consumed by the users
which are not in outage. 
The remaining server capacity $1-C$ (which is not needed to serve 
users in classes $1,\ldots,K$) can be used to offer some 
``lower quality'' service (e.g. streaming with lower video resolution, etc)
to the users in classes $K+1,\ldots, J$ which are in outage. Note
by~(\ref{e.LESF}) 
that  the remaining server capacity under the policy LESF$(\delta)$ 
is at least
$$1-C\ge \varphi_{K} 
\sum_{j=K+1}^J X_j\ind(\varphi_j\le \varphi_K(1+\delta))\,.$$
Hence, the server accepting the class  $K$ as the 
least-priority class being ``fully'' served,  
leaves enough remaining  capacity to be able to make the same effort
(allocate service capacity $\varphi_{K}$) 
for all users in outage in classes  whose service
demand exceeds $\varphi_K$
by no more than $\delta\times100\%$. These latter users will not have 
``full'' required service (since
this requires more resources, $\varphi_j>\varphi_K$, for the full service)
but only  some ``lower quality'' service (to be specified in what
follows).  Consequently, 
one can conclude that  policies LESF$(\delta)$  with
$\delta>0$, being sub-optimal,  ensure some {\em fairness}, in
the sense explained above. Clearly the policy  LESF$(\infty)$
(i.e., with $\delta=\infty$) is the most fair,
in the sense that it reserves enough remaining capacity
to offers the ``lower quality'' service 
for {\em all} users in outage (no deep outage). Thus, we will call
LESF$(\infty)$ the {\em LESF fair} policy.
\end{Remark}

\paragraph{Best-effort service for users in outage}
We will specify now a natural
model for the ``best-effort'' streaming bit-rates that can be
offered for users in outage in association with a given  LESF($\delta$) policy.
For $k> K=K^\delta(\bmX)$ denote
\begin{equation}\label{e.sub-rate}
r_k'=r_k^{'\delta}(\bmX)=r_k^{\max}
\frac{1-\sum_{j=1}^{K}X_j\varphi_j}%
{\sum_{j=K+1}^JX_j\ind(\varphi_j\le (1+\delta)\varphi_K)}\quad
\text{if $\varphi_k\le (1+\delta)\varphi_K$ and 0 otherwise.} 
\end{equation}
The rates $(r_1,\ldots,r_K,r'_{K+1},\ldots,r'_J)$ are achievable
for the configuration $\bmX$
under resource constraint~(\ref{e.MErlang}).
Note that users in classes $j$ such that $\varphi_j>
(1+\delta)\varphi_K$ do not receive any positive bit-rate. We say,
they are in {\em deep outage}. Finally, we remark that
the service~(\ref{e.sub-rate}) is ``resource fair''
among users in outage but not in deep outage.

\subsubsection{Performance metrics}
\label{sss.Perf-eval}
Configuration of users $X(t)$ evolves in time,
it changes at arrival and departure times of users.
At each arrival or departure epoch the base station applies the outage
policy to the new configuration of users to decide which classes of
users receive requested streaming rates and which are in outage (or
deep outage). 

Let us introduce the following characteristics of the {\em typical
 call (user)} of class $k=1,\ldots,J$.
\begin{itemize}
\item  $P_k$ denotes the {\em probability of outage at the arrival
    epoch for class $k$}. This is the probability that the typical call
of this class  is put in outage  immediately at its arrival epoch.
\item $D_k$ denotes the {\em mean total time spent in 
outage during the typical call of class  $k$}.
\item $M_k$ denotes the {\em mean  number of outage incidents 
experienced during the typical call of class $k$}. 
\end{itemize}
More  formal definitions of these characteristics, 
as well as other {\em system} characteristics (as e.g. the intensity of
outage incidents) is  given in the Appendix.
We also introduce two further characteristics related to the mean
{\em throughput} obtained during the typical call of class $k=1,\ldots,J$. 
\begin{itemize}
\item Denote by $T_k$ the {\em  mean throughput during the typical
  call of class $k$}. This is the mean 
bit-rate obtained during such a call,
taking into account the bit-rate $r_k$ when the call is not in outage
and the best-effort bit rate $r_k'$ obtained during the outage
periods, averaged over call duration. 
\item Let  $T'_k$  be the {\em part of the throughput obtained
during the outage periods of the typical call of class $k$}. This is 
 the mean best-effort  bit-rate of such call averaged over
outage periods.
\end{itemize}

\subsection{Model evaluation}
\label{ss.results-Erlang}
\subsubsection{Results}
We will show how the performance metrics regarding outage incidents
and duration, introduced in Section~\ref{sss.Perf-eval},
can be expressed using probability distribution functions of some
{\em linear} functionals of the  random vector
$X_1,\ldots,X_J$ of independent  Poisson random variables
with parameters $\rho_j$, respectively. Recall that these random
variables correspond to the number of calls of different classes
present in the stationary regime of our streaming model.

Specifically, 
for given $\delta>0$, $k=1,\ldots,J$ and $t\ge0$ denote 
\begin{equation}\label{e.FkDelta}
F_k^\delta(t):=
\Pr\left\{\sum_{j=1}^{k}X^{\delta,k}_{j}\varphi_{j}\leq t\right\}\,,
\end{equation}
where $X_j^{\delta,k}=X_j$ for $j=1,\ldots,k-1$ and
$X^{\delta,k}_k=\sum_{j=k}^JX_j\ind(\varphi_j\le \varphi_k(1+\delta))$.

The following results follow from the analysis of a more general model
presented in the Appendix.
\begin{Prop}\label{p.Erlang}
The probability of outage at the arrival epoch for  user
of class~$k$
is equal to 
\begin{equation}
P_k=1-F_k^\delta(1-\varphi_k)\, \qquad k=1,\ldots,J\,.
\label{e.InterruptionProbability-Erlang}%
\end{equation}
The mean total time spent in outage during the typical call of class
$k$ is equal to 
\begin{equation}\label{e.InterruptionDuration-Erlang}%
D_k=\frac{P_k}{\mu_{k}}= \frac{1-F_k^\delta(1-\varphi_k)}{\mu_k}\,\qquad k=1,\ldots,J\,.
\end{equation}
 The mean  number of outage incidents experienced during the typical
 call of class $k$ (after its arrival) is equal to 
\begin{equation}
M_k=\frac{1}{\mu_{k}}\sum_{j=1}^{J}\lambda_{j}
\Bigl(F_k^\delta(1-\varphi_k)-F_k^\delta(1-\varphi_k-\varphi_j)\Bigr)
\qquad k=1,\ldots,J\,.  \label{e.NbInterruptions-Erlang}%
\end{equation}
\end{Prop}

\begin{proof}
Note first that the  functions $F^\delta_k(t)$ defined in~(\ref{e.FkDelta})
 allow one to represent the stationary probability  that the 
configuration of users is in a  state in which 
the LESF($\delta$) policy serves users of class $k$
$$F^\delta_k(1)=\Pr\left\{\sum_{j=1}^{k}X^{\delta,k}_{j}\varphi_{j}\le 1
\right\}\,.$$ 
In the general model described in the  Appendix
we denote this state by $\calF_k$ and its probability by $\pi(\calF_k)$. Thus $\pi(\calF_k)=F^\delta_k(1)$.  Moreover, 
$$1-F_k^\delta(1-\varphi_k)=\Pr\left\{\sum_{j=1}^{k}X^{\delta,k}_{j}\varphi_{j}>
1-\varphi_k\right\}$$ 
is the probability that  the  steady state configuration of users appended with one user of class $k$ 
is in the complement $\calF'_k$ of the state $\calF_k$,
i.e., all users of class $k$ are in outage (meaning  $k>K^\delta(\bmX')$, where
$\bmX'=(X_1,\ldots,X_k+1,\ldots,X_J)$). 
Thus the expression~(\ref{e.InterruptionProbability-Erlang}) 
follows from Proposition~\ref{p.InterruptionProbability}.
Similarly~(\ref{e.InterruptionDuration-Erlang})
follows from Proposition~\ref{p.InterruptionDuration}
and~(\ref{e.NbInterruptions-Erlang})
follows from Proposition~\ref{p.NbInterruptions}.
\end{proof}


Regarding the throughput characteristics, we have the following result. 
\begin{Prop}\label{e.Throughput-Erlang}
The mean throughput during the typical call of class $k$ is equal to 
$$T_k=r_k(1-P_k)+T_k'=r_k F_k^\delta(1-\varphi_k)+T_k'
\,,$$ 
where
\begin{equation}\label{eq.Tprim}
T_k'=\E\left[r_k^{'\delta}(X_1,\ldots,X_k+1,\ldots,X_J)
\ind\Bigl(K^\delta(X_1,\ldots,X_k+1,\ldots,X_J)<k\Bigr)\right]\,,
\end{equation}
with the best-effort rate $r'_k(\cdot)$ given by~(\ref{e.sub-rate}) 
and the  least-priority class $K^\delta(\cdot)$ begin served by the
$LESF(\delta)$ policy given by~(\ref{e.LESF}), is the  
part of the throughput obtained during the outage periods.
\end{Prop}
{\em Proof} of this proposition is given in the Appendix.

\begin{Remark}\label{r.SINR-outage-region}
Recall from~(\ref{e.sub-rate}) that the variable rates $r'_k$
are obtained by the user of class  $k$ when he is in
outage, i.e., $k>K$. They are non-null, $r'_k>0$,  only if
 $\varphi_k\le (1+\delta)\varphi_K$.
In the case of equal requested rates $r_k$, the intersection of the
two conditions
$0<r'_k$ and $k>K$  is equivalent to
\begin{equation}\label{eq.outageSINR}
(1+\text{SINR}_K)^{1/(1+\delta)}-1\le
\text{SINR}_k\le\text{SINR}_K\,.
\end{equation}
\end{Remark}

\subsubsection{Remarks on numerical evaluation}
\label{sss.Laplace}
In order to be able to use  the expressions given in~(\ref{p.Erlang})
we need to  evaluate the distribution functions  $F^\delta_k(t)$.
In what  follows we show how this can be done using Laplace
transforms. Regarding  the throughput in outage  $T_k'$, 
expressed in~(\ref{eq.Tprim}) as the expectation of a {\em non-linear}
functional of the vector  $(X_1,\ldots,X_J)$,
we will use  Monte Carlo simulations to obtain numerical
values for this expectation.

Denote by 
$\calL^\delta_k(\theta):=\int_0^\infty e^{-\theta s}F^\delta_k(s)\md
s$ the Laplace transform of  the function $F^\delta_k(t)$.
\begin{fact}
We have 
\[
\calL^\delta_k(\theta)=\frac{1}{\theta}\exp\left[
\sum_{j=1}^{k}\rho^{\delta,k}_{j}\left(  e^{-\theta\varphi_{j}}-1\right)  \right]\,,
\]
where $\rho^{\delta,k}_j=\rho_j$ for $j=1,\ldots,k-1$ and 
$\rho^{\delta,k}_k=\sum_{j=k}^J\rho_j\ind(\varphi_j\le \varphi_k(1+\delta))$.
\end{fact}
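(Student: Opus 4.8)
The plan is to reduce everything to a single expectation of the form $\E[e^{-\theta S}]$, where $S:=\sum_{j=1}^{k}X^{\delta,k}_{j}\varphi_{j}$ is the nonnegative random variable whose distribution function is exactly $F_k^\delta$. The statement then splits into two independent ingredients: an elementary identity relating the Laplace transform of a distribution function to the Laplace--Stieltjes transform (equivalently, the moment generating function) of the underlying variable, and the well-known form of the Laplace transform of a linear combination of independent Poisson variables.

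First I would establish the identity $\calL^\delta_k(\theta)=\frac{1}{\theta}\E[e^{-\theta S}]$. Since $S\ge0$ almost surely, I write $F_k^\delta(s)=\E[\ind(S\le s)]$ and apply Fubini--Tonelli (all integrands are nonnegative) to interchange the $s$-integral and the expectation:
\[
\calL^\delta_k(\theta)=\int_0^\infty e^{-\theta s}\E[\ind(S\le s)]\,\md s
=\E\left[\int_S^\infty e^{-\theta s}\,\md s\right]
=\E\left[\frac{e^{-\theta S}}{\theta}\right]\,,
\]
where the inner integral runs over $\{s\ge S\}$ because $\ind(S\le s)=1$ precisely there, and the lower limit lies in $[0,\infty)$ thanks to $S\ge0$. (The same identity also drops out of a one-line integration by parts, but the Fubini route avoids any fuss about the boundary term at $s=0$.)

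Next I would compute $\E[e^{-\theta S}]$ by writing $S$ as an explicit linear functional of the independent Poisson vector $(X_1,\ldots,X_J)$. Reading off coefficients from the definition of $X^{\delta,k}_j$, the variable $X_j$ enters $S$ with coefficient $\varphi_j$ for $j<k$, with coefficient $\varphi_k$ for those $j\ge k$ satisfying $\varphi_j\le\varphi_k(1+\delta)$, and with coefficient $0$ otherwise. By independence the expectation factorizes over $j$, and for each Poisson $X_j$ with mean $\rho_j$ the standard formula $\E[e^{-\theta c X_j}]=\exp[\rho_j(e^{-\theta c}-1)]$ applies. Collecting exponents and using that every coefficient-$0$ term contributes $e^{0}-1=0$, the sum over $j\ge k$ collapses exactly into the single weight $\rho^{\delta,k}_k=\sum_{j=k}^J\rho_j\ind(\varphi_j\le\varphi_k(1+\delta))$ attached to $e^{-\theta\varphi_k}-1$, while the terms $j<k$ retain weight $\rho^{\delta,k}_j=\rho_j$; this is precisely the claimed exponential. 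Multiplying by the prefactor $1/\theta$ from the first step closes the argument.

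There is no genuine obstacle here: both steps are standard. The only points demanding care are verifying $S\ge0$ (immediate, since all $\varphi_j$ and $X_j$ are nonnegative), which is what guarantees the correct $1/\theta$ factor in the transform-of-distribution-function identity, and handling the indicator so that the contributions of all $X_j$ with $j\ge k$ (which share the common coefficient $\varphi_k$) are aggregated into the single weight $\rho^{\delta,k}_k$, rather than being attributed to the distinct values $\varphi_j$.
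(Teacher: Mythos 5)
Your proof is correct and follows essentially the same route as the paper: both rest on the identity $\int_0^\infty e^{-\theta s}F(s)\,\md s=\frac1\theta\int_0^\infty e^{-\theta s}F(\md s)$ (which you derive via Fubini rather than cite) together with the Laplace transform of linear functionals of independent Poisson variables. The only cosmetic difference is that the paper first aggregates $\sum_{j=k}^J X_j\ind(\varphi_j\le\varphi_k(1+\delta))$ into a single Poisson variable of parameter $\rho^{\delta,k}_k$ before applying the Poisson formula, whereas you factorize over all $J$ original variables and collect the exponents afterward --- the computations are identical.
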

\begin{proof}
Note that for given $\delta>0$, $k=1,\ldots,J$ the random variables
$X^{\delta,k}_1,\ldots,X^{\delta,k}_k$ are  independent, of   Poisson distribution,
with parameters $\rho^{\delta,k}_1,\ldots,\rho^{\delta,k}_k$, respectively.
The result follows from~\cite[Proposition
  1.2.2]{BaccelliBlaszczyszyn2009T1} and a general relation $\int_0^\infty e^{-\theta s}F(s)\,\md s=\frac1\theta\int_0^\infty e^{-\theta s}F(\md s)$.
\end{proof}

The probabilities $F^\delta_k(\cdot)$  may be
retrieved from $\calL^\delta_k(\cdot)$ using standard techniques.
For example~\citep[with the  algorithm implemented by
Hollenbeck~\cite{Hollenbeck1998} in Matlab]{HoogKnightStokes1982}.
In what follows we present a more explicit result 
based on the Bromwich contour inversion integral.
In this regard, denote
$\overline\calL^\delta_k(\theta)=1/\theta-\calL^\delta_k(\theta)$
(which is the Laplace transform of complementary distribution function
$1-F^\delta_k(t)$). Also, denote by $\calR(z)$ the real part of the complex number~$z$.
\begin{fact}
We have
\begin{equation}\label{eq:Bromwich}
F^\delta_k(t)=1-\frac{2e^{at}}{\pi}
\int_{0}^\infty \calR\left(\overline\calL^\delta_{k}
(a+iu)\right)\cos ut\,\md u\,,
\end{equation}
where $a>0$ is an arbitrary constant.
\end{fact}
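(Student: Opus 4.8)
The plan is to derive~(\ref{eq:Bromwich}) from the classical Bromwich (Mellin) inversion formula applied to the complementary distribution function. Set $g(t):=1-F^\delta_k(t)$ for $t\ge0$ and $g(t):=0$ for $t<0$. Since $\sum_{j=1}^{k}X^{\delta,k}_{j}\varphi_{j}$ is a nonnegative random variable, $F^\delta_k$ is a genuine distribution function supported on $[0,\infty)$, so $g$ is its causal complementary tail and, by definition, $\overline\calL^\delta_k(\theta)=\int_0^\infty e^{-\theta s}g(s)\,\md s$ is its Laplace transform. Because $0\le g\le 1$ and $g$ decays (in fact super-exponentially, the law being a mixture of values of Poisson sums), this integral converges and $\overline\calL^\delta_k$ is analytic throughout the half-plane $\calR(\theta)>0$; hence any $a>0$ lies to the right of all singularities, and the inversion integral
\[
g(t)=\frac{1}{2\pi i}\int_{a-i\infty}^{a+i\infty}e^{\theta t}\,\overline\calL^\delta_k(\theta)\,\md\theta
\]
is valid, reproducing $g(t)$ at each continuity point of $F^\delta_k$ (and the half-sum of the one-sided limits at the countably many jumps of the discrete law).

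First I would parametrise the contour by $\theta=a+iu$, $\md\theta=i\,\md u$, turning this into $g(t)=\frac{e^{at}}{2\pi}\int_{-\infty}^{\infty}e^{iut}\,\overline\calL^\delta_k(a+iu)\,\md u$. Writing $\overline\calL^\delta_k(a+iu)=R(u)+iI(u)$ with $R(u)=\calR(\overline\calL^\delta_k(a+iu))$ and $I(u)=\Im(\overline\calL^\delta_k(a+iu))$, the representation $\overline\calL^\delta_k(a+iu)=\int_0^\infty e^{-as}g(s)(\cos us-i\sin us)\,\md s$ shows that $R$ is even and $I$ is odd in $u$. Splitting $e^{iut}=\cos ut+i\sin ut$ and discarding the imaginary part of the integral (which vanishes because $g$ is real), the even/odd parities fold the integral onto $[0,\infty)$ and give
\[
g(t)=\frac{e^{at}}{\pi}\int_0^\infty\bigl(R(u)\cos ut-I(u)\sin ut\bigr)\,\md u.
\]

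The decisive step is to eliminate the sine term by causality. Evaluating the same identity at a negative argument $t=-\tau<0$, where $g(-\tau)=0$, and using $\cos(-u\tau)=\cos u\tau$, $\sin(-u\tau)=-\sin u\tau$, yields $\int_0^\infty\bigl(R(u)\cos u\tau+I(u)\sin u\tau\bigr)\,\md u=0$ for every $\tau>0$, equivalently $-\int_0^\infty I(u)\sin ut\,\md u=\int_0^\infty R(u)\cos ut\,\md u$ for $t>0$. Substituting this back into the displayed expression for $g(t)$ doubles the cosine integral and cancels the sine integral, so that $g(t)=\frac{2e^{at}}{\pi}\int_0^\infty R(u)\cos ut\,\md u$; recalling $g(t)=1-F^\delta_k(t)$ produces exactly~(\ref{eq:Bromwich}).

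I expect the analytic subtleties, rather than the algebra, to be the real work: one must justify that the Bromwich integral converges (typically only conditionally, as a symmetric principal value in $u$), that it may be interchanged with the defining integral of $\overline\calL^\delta_k$, and that, $F^\delta_k$ being a step function, the formula holds at continuity points while returning the midpoint value at the jumps. Since this is the standard Laplace-transform inversion theorem under Dirichlet-type conditions, I would invoke it as a known result and merely verify its hypotheses (analyticity for $\calR(\theta)>0$, boundedness and decay of $g$), concentrating the written proof on the parity folding and the causality identity that yield the cosine-only form.
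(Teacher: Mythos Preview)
Your derivation is correct and is precisely the standard Bromwich-inversion-plus-causality argument that underlies the cited result. The paper itself gives no proof at all, simply referring the reader to Abate and Whitt~\cite{AbateWhitt95}; your write-up is essentially a self-contained version of that reference's derivation, so there is nothing to contrast.
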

\begin{proof}
See~\cite{AbateWhitt95}.
\end{proof}

\begin{Remark}
As shown in~\cite{AbateWhitt95}, the integral in~(\ref{eq:Bromwich}) 
can be numerically evaluated using the trapezoidal rule, with  the
parameter~$a$ allowing to control the approximation error. Specifically, 
for $n=0,1,\ldots$ define
$$
h_n(t)=h_n(t;a,k,\delta):=\frac{(-1)^ne^{a/2}}{t}
\calR\left(\overline\calL^\delta_{k}
\Bigl(\frac{a+2n\pi i}{2t}\Bigr)\right)\,,
$$
$S_n(t):=\frac{h_0(t)}{2}+\sum_{i=1}^n h_i(t)$,  and 
$S(t)=\lim_{n\to\infty} S_n(t)$.
Then  $\left|F_k^\delta(t)-(1-S(t))\right|\le e^{-a}$.
Finally, the (alternating) infinite series $S(t)$ can be efficiently approximated  
using for example the Euler summation rule 
$$S(t)\approx \sum_{i=0}^M{M \choose i}\,2^{-M}S_{N+i}(t)\,$$
with a typical choice $N=15$, $M=11$.
\end{Remark}

\begin{Remark}\label{r.KR}
The expression~(\ref{e.NbInterruptions-Erlang}) for the mean  number of outage incidents 
involves a sum of potentially big number of  terms $F_k^\delta(1-\varphi_k)-F_k^\delta(1-\varphi_k-\varphi_j)$, $j=1,\ldots,J$,
which are typically small, and which are evaluated via the inversion of the Laplace transform.
Consequently the sum may accumulate precision errors. In order to avoid this problem we propose another numerical 
approach for calculating $M_k$. It consists in representing   $M_k$ equivalently to (\ref{e.NbInterruptions-Erlang}) as
\begin{equation}\label{e.NbInterruptions-Erlang-KR}%
M_k=\frac{F_k^\delta(1-\varphi_k)}{\mu_{k}}\sum_{j=1}^{J}\lambda_{j}b_{k}(j)
\qquad k=1,\ldots,J\,.
\end{equation}
where 
%
\begin{equation}
b_{k}\left(  j\right)  =\frac{F_{k}^{\delta}\left(  1-\varphi_{k}\right)
-F_{k}^{\delta}\left(  1-\varphi_{k}-\varphi_{j}\right)  }{F_{k}^{\delta
}\left(  1-\varphi_{k}\right)  }\label{e.bjk}%
\end{equation}
Let $k$ and $\delta$\ be fixed. Recall the definition of  $F_{k}^{\delta}\left(  t\right)  $
in ~(\ref{e.FkDelta}) and note that the expression~(\ref{e.bjk})
may be written as
\[
b_{k}\left(  j\right)  =\frac{\Pr\left(  X\in\mathcal{F},X+\epsilon_{j}%
\notin\mathcal{F}\right)  }{\Pr\left(  X\in\mathcal{F}\right)  }%
\]
where $\mathcal{F}=\mathcal{F}(k)=\left\{  X\in\mathbb{R}^{J}:\sum_{j=1}^{k}X_{j}^{\delta
,k}\varphi_{j}\leq1-\varphi_{k}\right\}  $. The above expression may be seen
as the blocking probability for class $j$ in a classical multi-class  Erlang loss system
with the admission condition $X\in\mathcal{F}$. Consequently, 
 $b_{k}\left(  \cdot\right)  $ may be calculated by using
the \emph{Kaufman-Roberts algorithm}~\cite{Kaufman1981,Roberts1981}
and plugged into~(\ref{e.NbInterruptions-Erlang-KR}).
Note that by doing this we still need to calculate 
$F_k^\delta(1-\varphi_k)$ however avoid summing of $J$ differences of
these functions as in~(\ref{e.NbInterruptions-Erlang}). 
%
\end{Remark}

\section{Quality of real-time streaming in LTE} 
\label{s.numerical}
In this section we will use the model developed in Section~\ref{s.wireless}
to evaluate the quality of RTS in LTE  networks.
This single-server (base station) model will be used  to study the performance of one tagged base station  of a multi-cellular network under the following assumptions: 
\begin{itemize}
\item  We assume a regular hexagonal lattice of base stations on a torus. This allows us to consider the tagged  base station of the network as a typical one. 
\item Homogeneous (in space and time) Poisson arrivals on the torus are marked by  i.i.d. (across users and base stations) variables representing their shadowing with respect to different base stations. These variables, together with independent user locations determine  their  serving (strongest) base stations.  
A consequence of the independence of users locations and shadowing variables is that
the arrivals served by the tagged base station form an  independent thinning of the total Poisson arrival process to the torus and thus a Poisson  process too. Uniform distribution of user locations and identical distribution of the their shadowing variables imply that the intensity of the arrival process to the tagged base station
is equal to the total arrival intensity  to the torus divided by the number of stations.   
Moreover, the distribution of the SINR of the typical user of the tagged base station coincides with the distribution of the typical user of the whole network.     
\item The intensity of arrivals of some particular (SINR)-class to the tagged base station
is equal to the total intensity of arrivals to the tagged cell times the probability of  the random SINR of the typical user being in the SINR-interval corresponding to this class. 
\item We consider the ``full interference'' scenario, i.e., that all base stations emit the signal with the constant power, regardless of the number of users they serve (this number can be zero). This makes the interference, and hence the  service rates, of users of a given base station
independent of the service of other base stations (decouples the service processes of different base stations).
\end{itemize}

\vspace{-2ex}
\subsection{LTE model and traffic  specification}
\subsubsection{SINR distribution}
\label{ss.SINR}
Recall that the main motivation for considering a  multi-class model
was the necessity to  distinguish users with different radio 
conditions, related to  different values of the SINR they have with
respect to the serving base stations. 
In order to choose representative values of SINR in a given
network and to know what  fraction of users experience a given value,
we need to know the {\em (spatial) distribution 
of the SINR} (with respect to the serving base station)
experienced in this network (possibly biased by the spatial 
repartition of arrivals of streaming calls). 
This distribution can be obtained  from real-network measurements, 
simulations or  analytic
evaluation of an appropriate spatial, stochastic model.%
\footnote{For this latter
possibility, we refer the reader to a recent paper on 
Poisson modeling of real cellular networks subject to
shadowing~\cite{hextopoi}, as well as to~\cite{DHILLON2012},
completed in~\cite{coverage}, where the distribution of the 
the SINR in Poisson networks is evaluated explicitly.}
In this paper we will use the distribution 
of SINR obtained from the simulation compliant with the 3GPP recommendation 
in the so-called calibration case (to be explained in what follows).
At present, assume simply, that we are given a cumulative distribution
function (CDF) of the SINR expressed in dB,
$F(x):=\Pr\{10\log_{10}(\text{SINR})\le x\}$, obtained from either of these
methods. In other words, $F(x)$  represents  the fraction of mobile users
in the given network which experience the SINR (expressed in dB) not larger than $x$.

Consider a discrete probability mass function 
\begin{equation}\label{e.pk}
p_k:=F\Bigl(\frac{x_{k+1}+x_{k}}2\Bigr)-
F\Bigl(\frac{x_{k}+x_{k-1}}2\Bigr)\qquad k=1,2,\ldots,J\,,
\end{equation}
with $x_0=-\infty$, $x_{J+1}=\infty$. We define the class $k=1,\ldots,J$
of users as all users having  the SINR expressed in dB
in the interval $\bigl((x_{k}+x_{k-1})/2,
(x_{k+1}+x_{k})/2\bigr]$, and approximate their SINR by the common
value $\text{SINR}_k=10^{x_k/10}$. Clearly $p_k$  is  the fraction
of mobile users in the given network which experience the SINR
close to  $\text{SINR}_k$. Hence, in the case of a homogeneous streaming
traffic (the same requested streaming rates and mean streaming times,
which will be our default assumption in the numerical examples) 
we  can assume the intensity of arrivals $\lambda_k$ of users of class
$k$ to be equal to $\lambda_k=p_k\lambda$
where $\lambda=\sum_{i=k}^J\lambda_k$ is the total 
arrival intensity (per unit of time per serving base station) to be specified together with the CDF $F$ of the SINR.
 

\paragraph{CDF of the SINR  for  3GPP recommendation} 
\label{ss.SINR-LTE}
We obtain the CDF $F$ of SINR 
from the simulation compliant with the 3GPP recommendation 
in the so-called calibration case, (compare to~\cite[Figure~A.2.2-1%
(right)]{3GPP-36.814}). 
More precisely, 
we consider the geometric pattern of  BS 
placed on the $6\times6$ hexagonal lattice.
In the middle of each hexagon there are three symmetrically oriented
BS antennas, which gives a total of 108 BS antennas. The distance between the
centers of two neighboring hexagons is $0.5\,$km. 
Each BS antenna is characterised by the following  horizontal 
pattern  $A(\phi)=-\min(12(\phi/\theta)^2,A_m)$, where $\phi$
is the angle in degrees, with  $\theta = 70^\circ$, $A_m = 20dB$,
and uses transmission power 
$P = 60\text{dBm}$  (including omnidirectional gain of $14\text{dBi}$).
The distance-loss model (corresponding to the frequency carrier 
2GHz) is $L(r)= 128.1 + 37.6 × \log_{10}(r)[\text{dB}]$
where $r$ is the distance in km. 
A supplementary penetration loss of 20dB is added. 
The shadowing is modeled as a  log-normal random variable of
mean one and logarithmic standard deviation  of
deviation~8dB, cf~\cite{propag_extabst}.
The noise power equals
$-95\text{dBm}$  (which corresponds to a system bandwidth of 10MHz, a
noise floor of -174dBm/Hz and a noise figure of 9dB).
In order to obtain the empirical CDF of the SINR we  generate
 3600 random user locations uniformly in the network
(100 user locations per hexagon on average).
Each user is connected to the antenna  with the strongest received
signal (smallest propagation-loss including distance, shadowing and antenna
pattern) and the SINR is calculated.
The obtained empirical CDF~$F$  of the SINR is shown on Figure~\ref{f.cdf}.

\begin{figure}[t!]
\begin{center}
\includegraphics[width=0.7\linewidth]{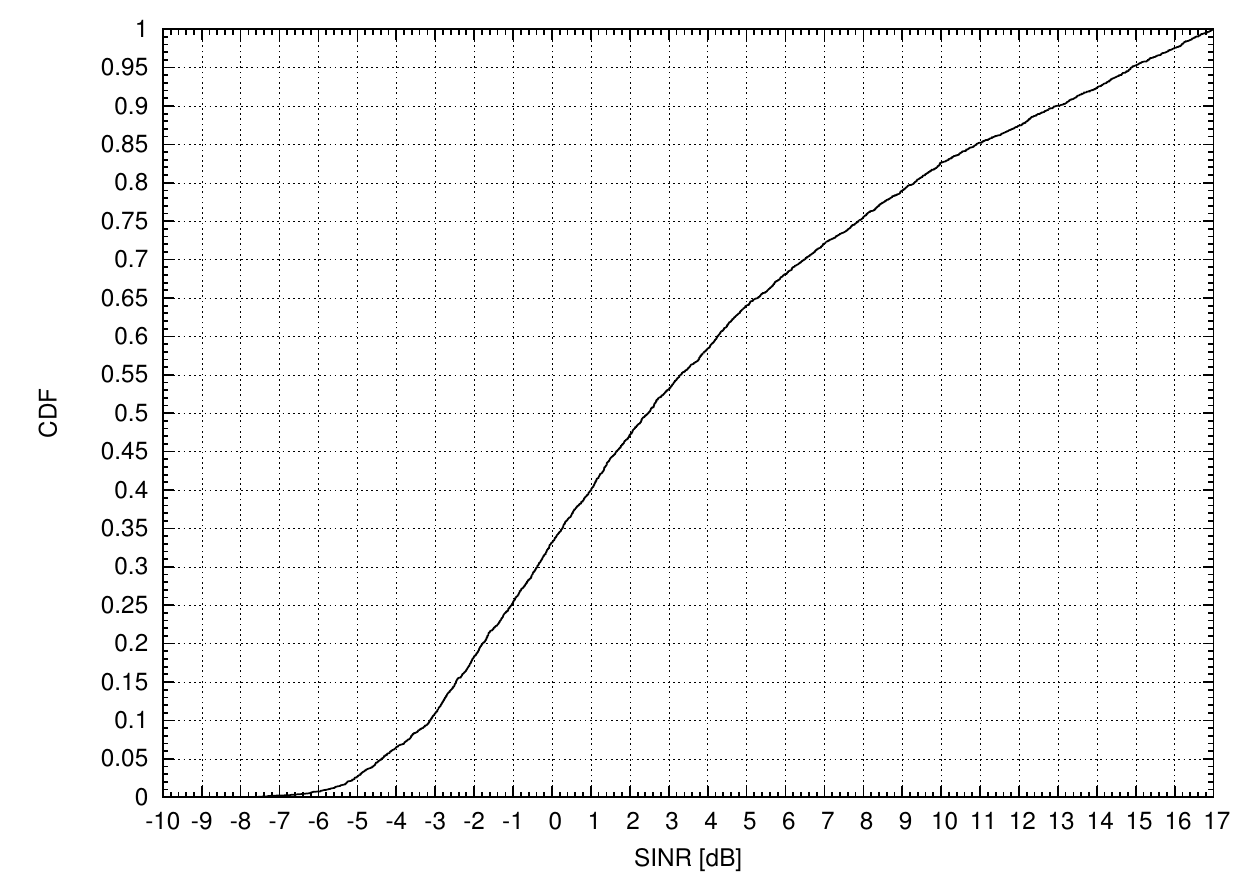}
\end{center}
\caption{Cumulative distribution function  of the SINR obtained according to 3GPP specification;
see Section~\ref{ss.SINR-LTE}. An abrupt transition of the CDF to 1 at $\text{SINR}=17\text{dB}$ is due to the 
cell sectorization: each mobile is  interfered by each of  the two antennas  co-located with its serving antenna on the same site 
(and serving the different sectors) with the power  equal  to at least 
$1\%$ of the power  received from the serving BS. Therefore the signal to  interference 
ratio is at most  $0.5\times 10^{-2}=−17\text{dB}$. 
\label{f.cdf}}
\end{figure}

\subsubsection{Link characteristics}
3GPP shows in~\cite[\S A.2]{3GPP-36.942}  that
there is a 25\% gap between the practical coding schemes and
the Shannon's limit for the AWGN channel.
Moreover, some of the transmitted bits are used for signaling,  which
induces a supplementary capacity loss of about 30\%
(see~\cite[\S6.8]{3GPP-36.211}). This made us assume $\gamma
=0.5(\approx 0.75(1 - 0.3))$ in~(\ref{e.phi-snr}).
The system bandwidth is $W=10\text{MHz}$.

\subsubsection{Streaming traffic}
We assume that all calls require  the same  streaming rate $r_k=
256\,\text{kbit/s}$ and have the same streaming call  time distribution.
We  split them into  $J=100$ user classes characterized by  values of
the SINR falling into different intervals
regularly approximating the SINR domain from 
 $x_1=-10\text{dB}$ to $x_J=17\text{dB}$
as explained in Section~\ref{ss.SINR}.
In our performance evaluation we will consider two values of the
spatially uniform traffic demand: 900 and 600 Erlang/km${}^2$.
(All results presented in what follows  do not depend on the mean
streaming time but only on the traffic demand).
Consequently, $k\,$th class traffic demand per unit of surface
is equal to,   respectively,  
$p_k\times900$ and $p_k\times600\,\text{Erlang}/\text{km}^2$,
where $p_k$ are given by~(\ref{e.pk}). Multiplying  by the surface
served by one base station equal to $\sqrt{3}\cdot
(0.5\,\text{km})^2/6\approx 0.0722\,\text{km}^2$
we obtain the traffic demand per cell, per class, equal to 
$\rho_k=p_k\times 900\times 0.0722\approx p_k\times 64.9$~Erlang
and $\rho_k=p_k\times 600\times 0.0722\approx p_k\times 43.3$~Erlang,
respectively, for the two studied scenario.

\subsection{Performance evaluation}
\label{ss.Nres}
Assuming the LTE and traffic model described above, we consider now
streaming policies LESF($\delta$) defined in Section~\ref{sss.Service-policy}.
Recall  that in doing so, we assume that users are served by the antenna
offering the smallest path-loss, and dispose orthogonal down-link
channels, with the maximal rates
$r_k^{\max}$ depending on the value
of the SINR (interference comes from non-serving BS) 
characterizing  class $k$.  Roughly speaking, 
LESF($\delta$) policy 
assigns the total requested streaming rate $r_k=256\text{kbit/s}$ for
the maximal possible subset of classes in the order of decreasing
SINR, leaving some capacity margin to offer some
``best-effort''  streaming rates for (some) users remaining in outage.  
These streaming rates  $r'_k$ given by~(\ref{e.sub-rate})
depend on the current configuration of users and 
are non-zero for  
users with SINR within the interval
$(1+\text{SINR}_K)^{1/(1+\delta)}-1\le
\text{SINR}\le\text{SINR}_K$,
where $\text{SINR}_K$ is the minimal 
value of SINR for which users are assigned the total requested
streaming rate; cf Remark~\ref{r.SINR-outage-region}.
In particular, 
LESF($0$), called the {\em optimal} policy, leaves no capacity margin for
users in outage, while  LESF($\infty$), called the {\em fair} one,
offers a  ``best-effort'' streaming rate for all users in outage
at the price of assigning the full requested rate $256$kbit/s to a smaller
number of classes (higher value of SINR$_K$)~\footnote{The LESF fair
  policy seems to be adopted  in  some implementations of the LTE.}.
In what follows, we use our results of 
Section~\ref{ss.results-Erlang}  to evaluate
performance of these streaming policies in the  LTE network model.

\subsubsection{Outage time}
\label{sss.outage-time}
Figure~\ref{f.duration900} shows
the mean time of the streaming call spent in 
outage normalized by call duration, $\mu_kD_k$, evaluated using~(\ref{e.InterruptionDuration-Erlang}),
in function of the SINR value
characterizing class $k$, for the  traffic 900~Erlang/km${^2}$
and different policies LESF$(\delta)$. Figure~\ref{f.duration600} 
shows the analogous results assuming traffic load of 600~Erlang/km${^2}$. 
The main observations are as follows:
\begin{itemize}
\item All  LESF policies exhibit a cut-off behaviour: the fraction of
  time in outage drops rapidly from 100\% to 0\% when SINR transgresses
  some critical values. This cut-off is more strict for the optimal policy.
 \item For the traffic of  900~Erlang/km${^2}$,  users with  SINR$\ge3\text{dB}$ are practically never in outage,
when the optimal policy is used. The same holds true for users with
SINR$\ge13\text{dB}$, when the fair policy is used.
\item When the traffic drops to 600~Erlang/km${^2}$, these critical
  values of SINR decrease by $2\text{dB}$ and $5\text{dB}$,
  respectively, for the optimal and the fair policy. Note that the
  fair policy is more sensitive to higher traffic load.
\end{itemize}

\begin{figure}[t!]
\begin{center}
\includegraphics[width=0.7\linewidth]{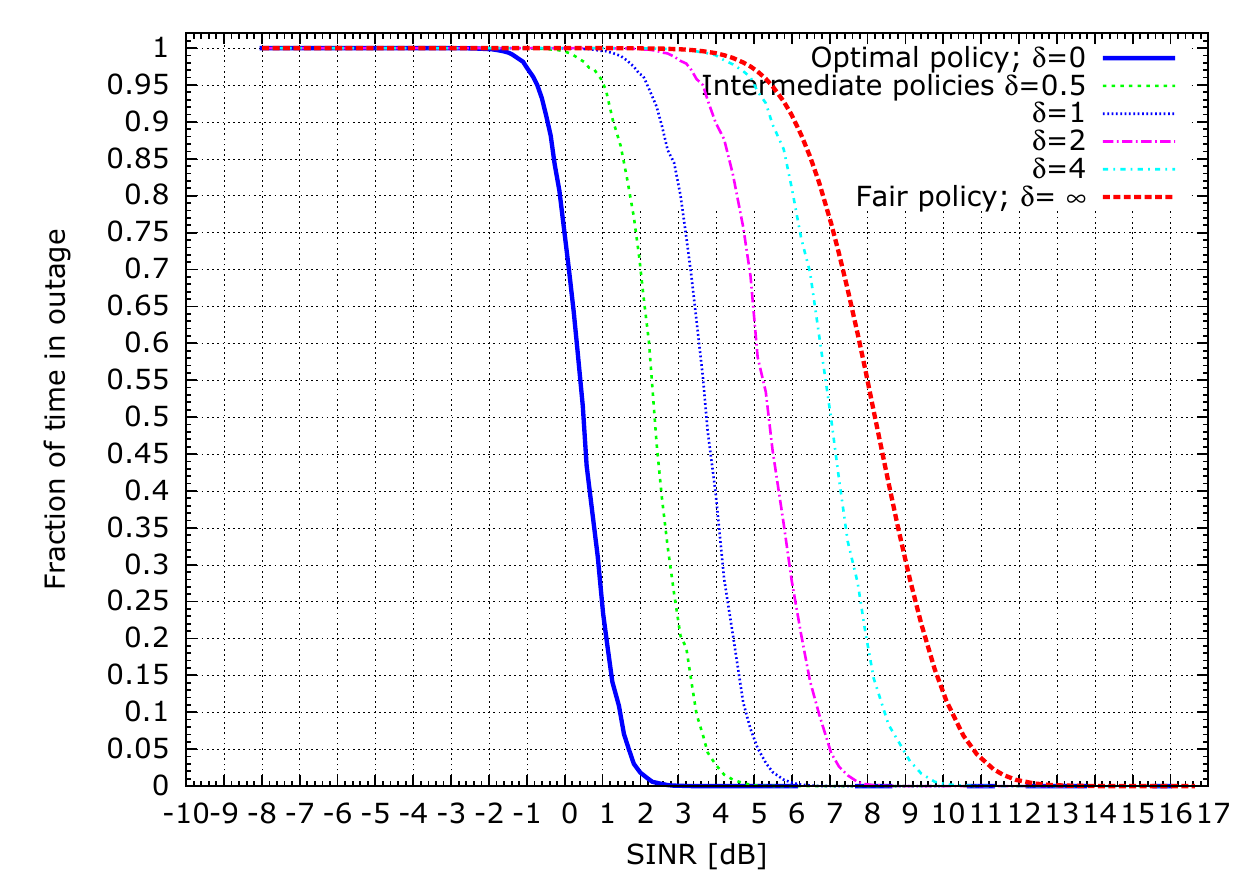}
\end{center}
\vspace{-4ex}
\caption{Mean fraction of the requested streaming time in 
  outage,  in function of the user SINR 
for different policies LESF$(\delta)$; traffic 900~Erlang/km${^2}$. 
\label{f.duration900}}
\begin{center}
\includegraphics[width=0.7\linewidth]{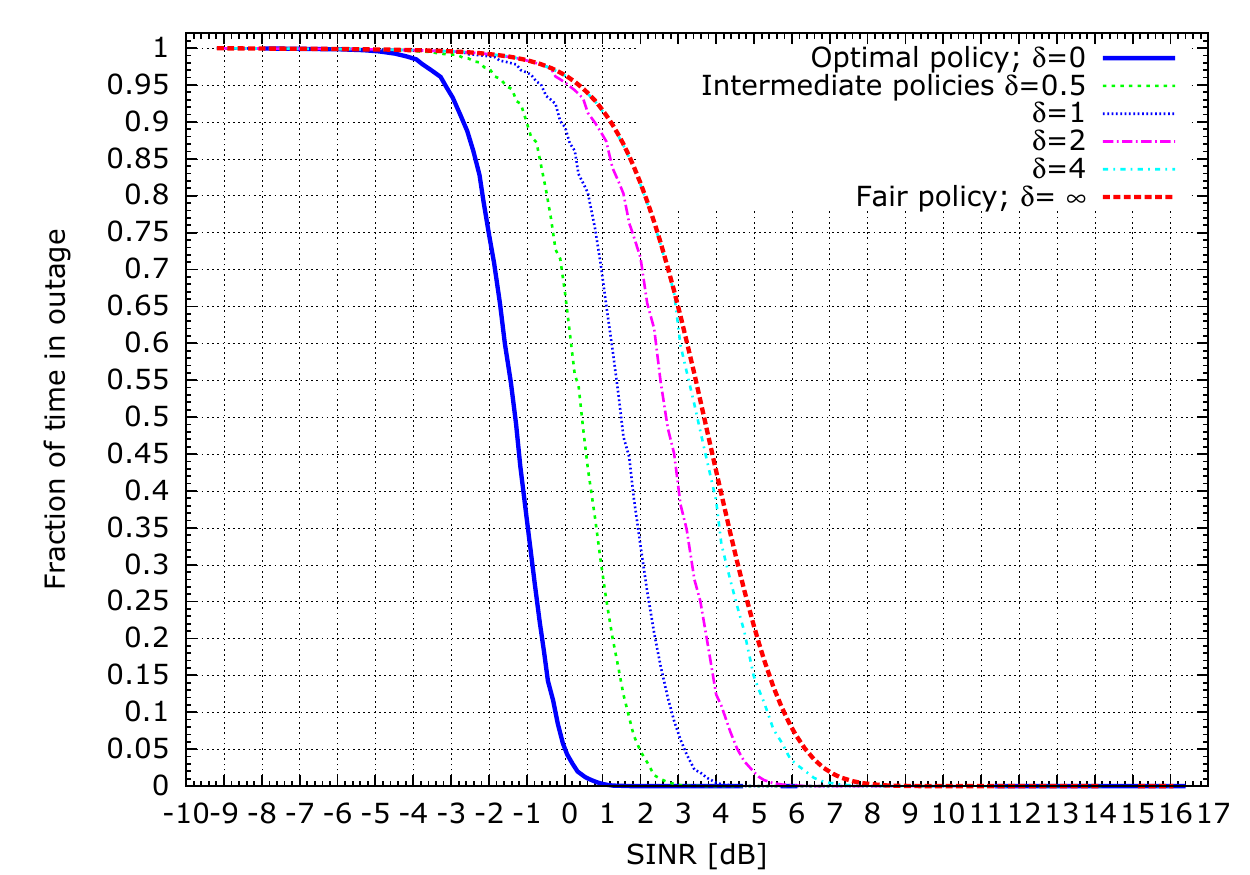}
\end{center}
\vspace{-4ex}
\caption{Fraction of time in outage as on Figure~\ref{f.duration900}
for traffic 600~Erlang/km${^2}$. 
\label{f.duration600}}
\vspace{-4ex}
\end{figure}

\subsubsection{Number of outage incidents}
\label{sss.outage-incidents}
Figure~\ref{f.number900} shows the mean number of outage incidents
per streaming call, $M_k$  evaluated using~(\ref{e.NbInterruptions-Erlang}),
in function of the SINR value characterizing class $k$, 
for the  traffic 900~Erlang/km${^2}$
and different policies LESF$(\delta)$. 
(Recall that  we assume the same streaming time distribution for all
users, and hence $\lambda_j/\mu_k=\rho_j$ making  the expression
in~(\ref{e.NbInterruptions-Erlang}) depend only on the vector of traffic demand
per class.)   
Figure~\ref{f.number600} 
shows the analogous results assuming traffic demand of 600~Erlang/km${^2}$. 
The main observations are as follows:
\begin{itemize}
\item For all policies, the number of outage incidents (during the
  service) is non-zero 
only for users with the SINR close to the critical values 
revealed by the analysis of
the outage times. Users with SINR below these values are constantly in outage
while users with SINR above them never in outage. 
\item More fair policies generate slightly more outage incidents.
The worst values are 2 to 2.2 interruptions per call for the
optimal policy, depending on the traffic value, and  2.4 to 3  interruptions per
call for the fair  policy.
\end{itemize}
Studying outage times and outage incidents we do not see 
apparent reasons for considering fair policies. This motivates our
study of the best-effort service in outage.

\begin{figure}[t!]
\begin{center}
\includegraphics[width=0.7\linewidth]{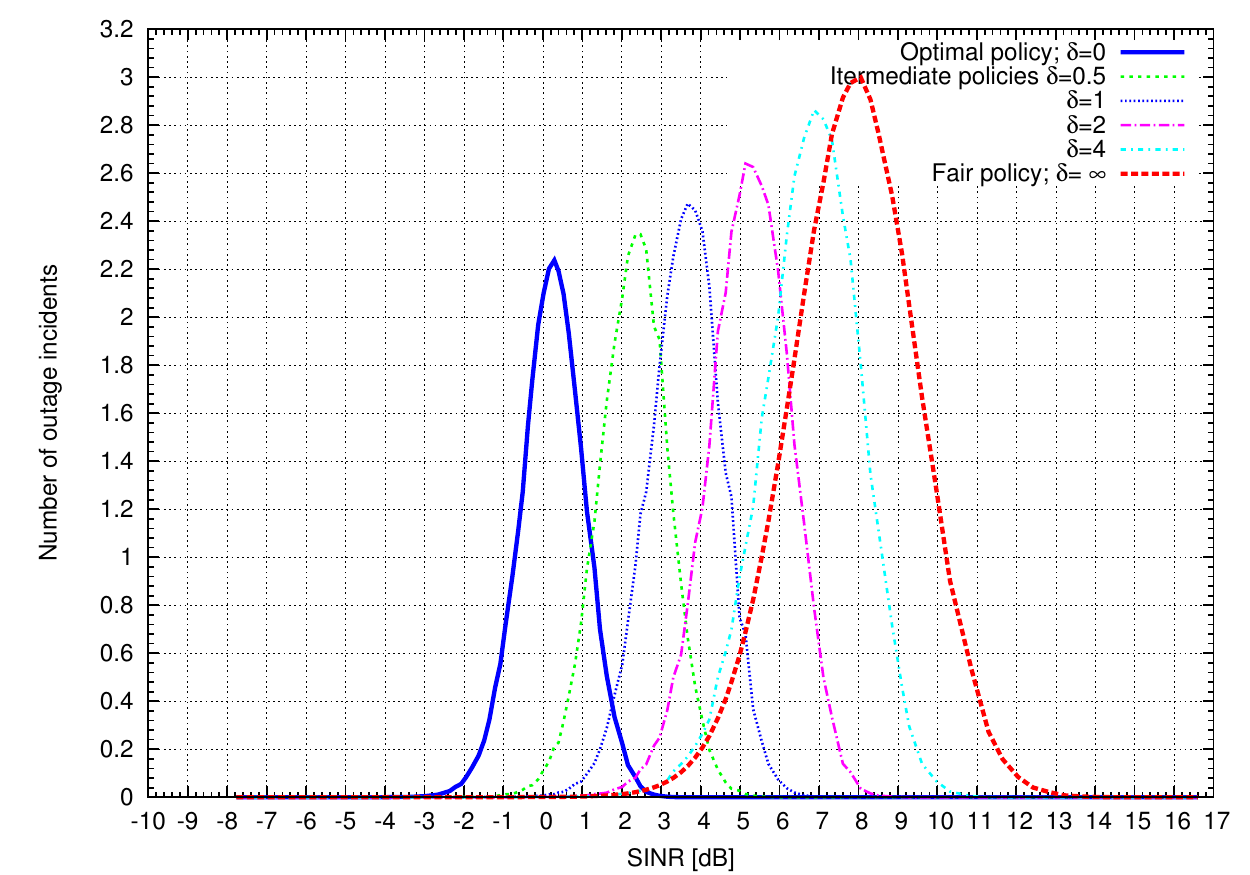}
\end{center}
\vspace{-4ex}
\caption{Number of outage incidents during the 
requested streaming time,  in function of the user SINR 
for different policies LESF$(\delta)$; traffic 900~Erlang/km${^2}$. 
\label{f.number900}}
\begin{center}
\includegraphics[width=0.7\linewidth]{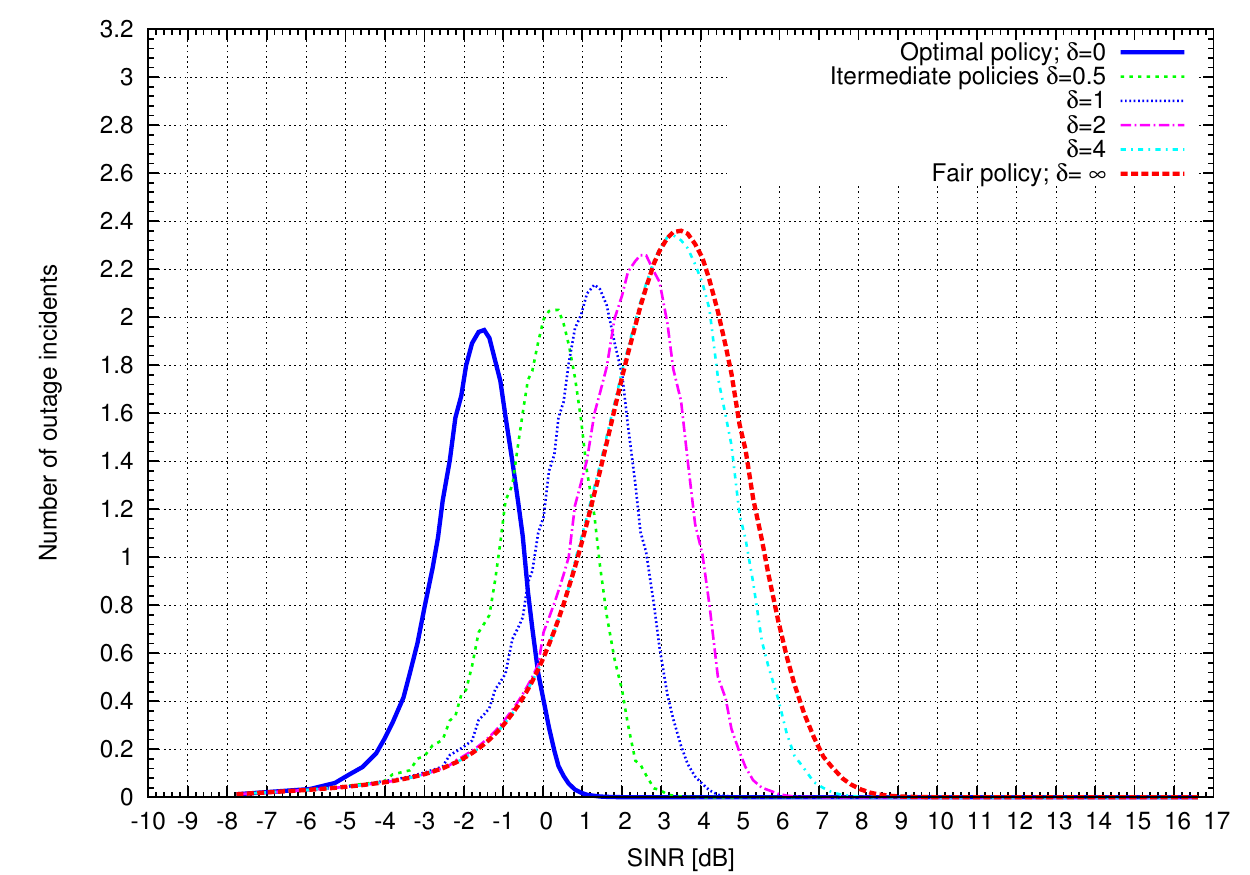}
\end{center}
\vspace{-4ex}
\caption{Number of outage incidents as on Figure~\ref{f.number900}
for traffic 600~Erlang/km${^2}$. 
\label{f.number600}}
\vspace{-4ex}
\end{figure}

\begin{figure}[t!]
\begin{center}
\includegraphics[width=0.7\linewidth]{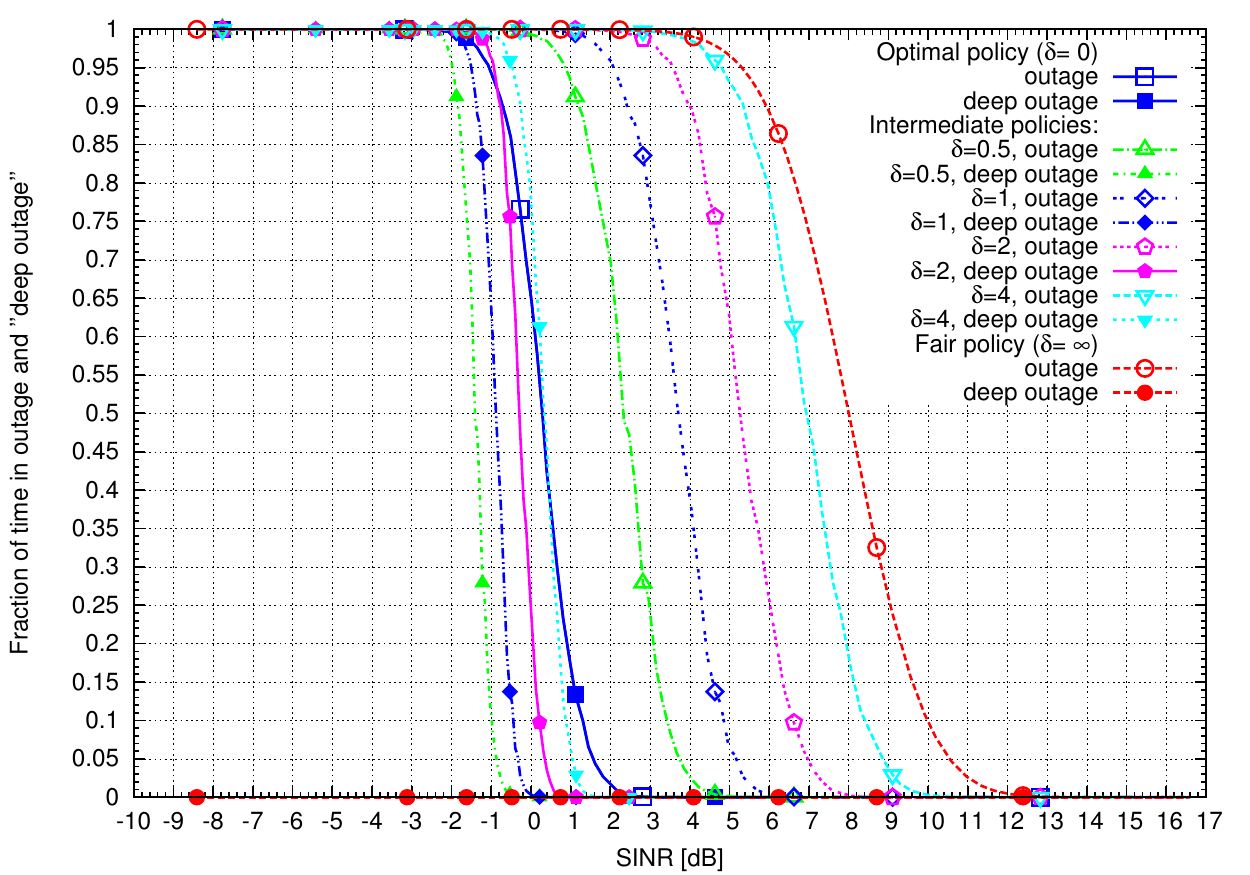}
\end{center}
\vspace{-4ex}
\caption{Deep outage versus outage time.
For any policy
  LESF$(\delta)$, with $0<\delta<\infty$, the left curve of a given
  style represents 
 the fraction of time spent in deep outage.
The right curve of a given style recalls the fraction of time spent in outage
(already plotted on Figure~\ref{f.duration900}).
The optimal policy ($\delta=0$) does not offer any 
 ``best effort'' service. The fair policy ($\delta=\infty$) offers this
 service for all users in outage.
\label{f.Throughput900bis}}
\begin{center}
\includegraphics[width=0.7\linewidth]{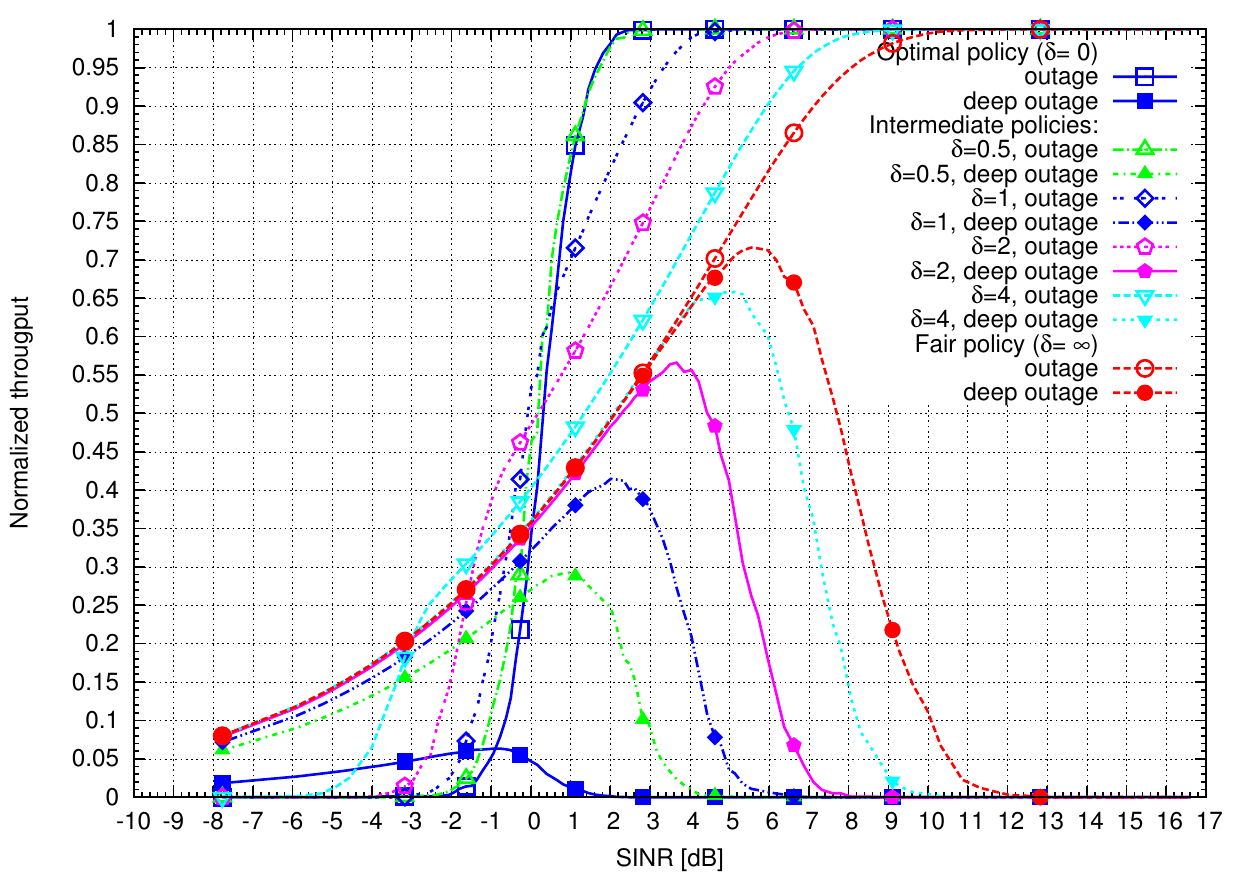}
\end{center}
\vspace{-4ex}
\caption{Mean total throughput normalized to its maximal value
  $256\text{kbit/s}$ obtained  during the service time (upper curves)
and its fraction obtained when a user is in outage (lower curves)
for different policies LESF$(\delta)$  traffic 900~Erlang/km${^2}$. 
\label{f.Throughput900}}
\vspace{-4ex}
\end{figure}

\subsubsection{The role of the ``best effort'' service}
\label{sss.Throughput}
Figure~\ref{f.Throughput900bis} shows the fraction of time spent in 
deep outage  in function of the SINR,
assuming traffic 900~Erlang/km${^2}$. These values should be compared
to the fraction of time spend in outage (for convenience 
copied on Figure~\ref{f.Throughput900bis} from Figure~\ref{f.duration900}).
Recall, users in outage do not receive
the full requested streaming rate (assumed 256kbit/s in our example),
however they do receive some non-null
``best effort''  rates given by~(\ref{e.sub-rate}),
unless they are in deep outage --- have SINR too small;
cf Remark~\ref{r.SINR-outage-region}.
Considering users in outage but not in deep outage 
as ``partially satisfied'', 
increasing fairness margin $\delta$ 
allows to (at least) partially satisfy users with decreasing SINR
values. Obviously the level of the ``partial satisfaction'' depends on the
throughput obtained in outage periods, which is our quantity of
interest on Figure~\ref{f.Throughput900}. 
It shows also two curves for all policies 
LESF$(\delta)$ assuming traffic 900~Erlang/km${^2}$.
The upper ones represent the mean total throughput 
realized during the service, normalized to its maximal value; i.e., 
$T_k/(256\text{kbit/s})$,  
in function of the SINR value characterizing class $k$. 
The fractions of this throughput 
realized during outage periods, $T'_k/(256\text{kbit/s})$, are
represented by the lower curves. 

Figures~\ref{f.Throughput900} and~\ref{f.Throughput900bis} teach us
that the role of the LESF$(\delta)$ policies with $\delta>0$
 may be two-fold.
\begin{itemize}
\item  
 LESF$(\delta)$ policies with
 small values of $\delta$, e.g. $\delta=0.5$, {\em improve 
``temporal homogeneity'' of service with respect to the optimal policy,
for users having SINR near the critical
value.} For example, a user having  SINR equal to $1\text{dB}$ 
is served by the optimal policy during 80\% of time with the full
requested streaming rate (cf. Figure~\ref{f.Throughput900bis}).
However, for the remaining 20\% of time it does not receive any service (deep
outage, rate 0bits/s). 
The policy LESF$(0.5)$ offers to such a user 80\% of the 
requested streaming rate during the whole streaming time
(cf. Figure~\ref{f.Throughput900}),
with no deep outage periods
(cf. Figure~\ref{f.Throughput900bis}). 
The price for this is that 
a slightly higher  SINR is required to receive the full requested
streaming rate (at least  5dB, instead of 3dB for the optimal policy).  
\item The fair policy  LESF$(\infty)$ 
{\em improves the spatial homogeneity of service}. 
It leaves no user in deep outage, however a much larger
SINR$=13$dB is required  for not to 
be in outage (cf. Figure~\ref{f.Throughput900bis}).
Moreover, the throughput of all users 
in outage but not in deep outage 
is substantially reduced 
e.g. from 80\% to 40\% for SINR$=1$dB, 
with respect to some intermediate  LESF$(\delta)$ policies (with
$0<\delta<\infty$). These intermediate policies can offer
an interesting compromise between the optimality and fairness.
 \end{itemize}

\begin{figure}[t!]
\begin{center}
\includegraphics[width=0.7\linewidth]{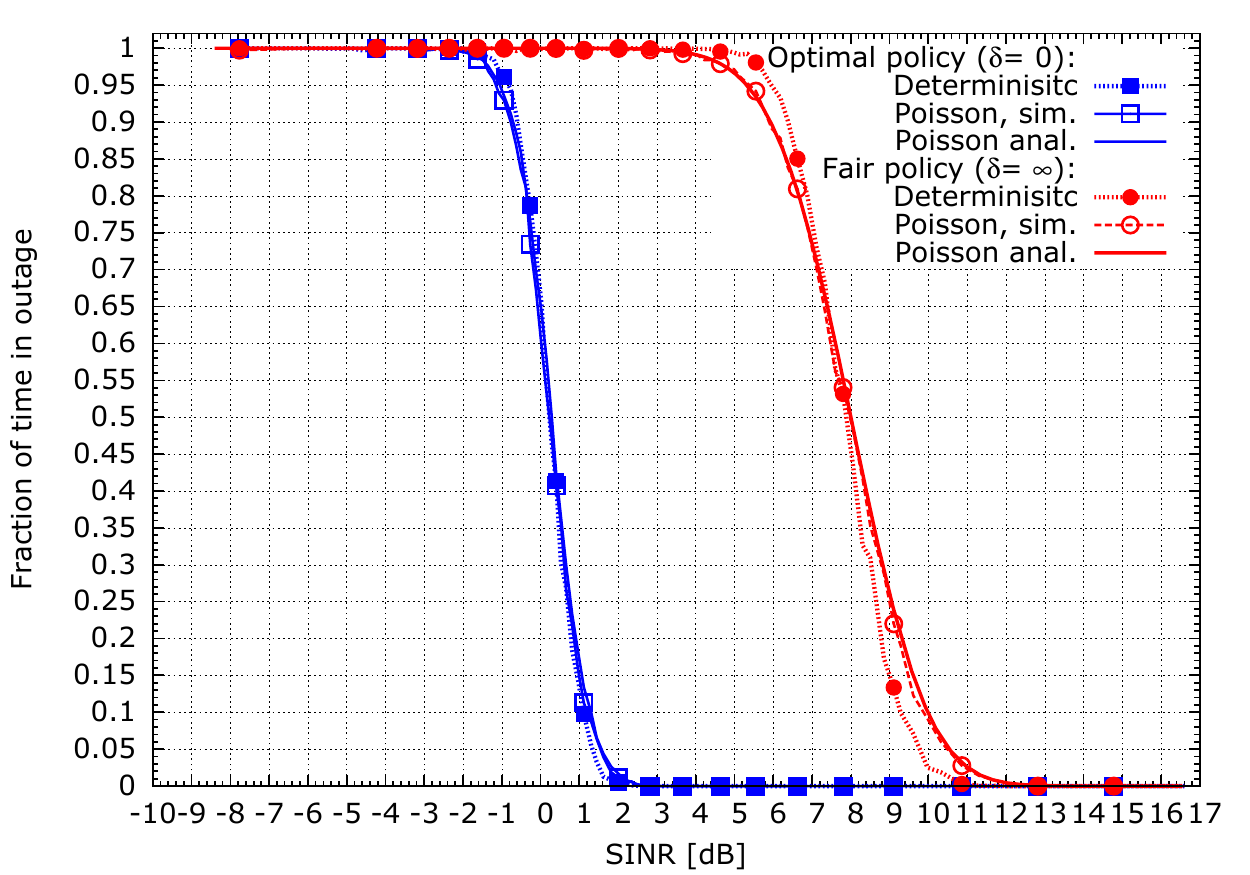}
\end{center}
\vspace{-4ex}
\caption{Impact of the deterministic arrival process (as compared to the Poisson
  one) on the mean 
fraction of the requested streaming time in 
 outage,  for the optimal and fair policy; traffic 900~Erlang/km${^2}$. 
\label{f.det-duration900}}
\begin{center}
\includegraphics[width=0.7\linewidth]{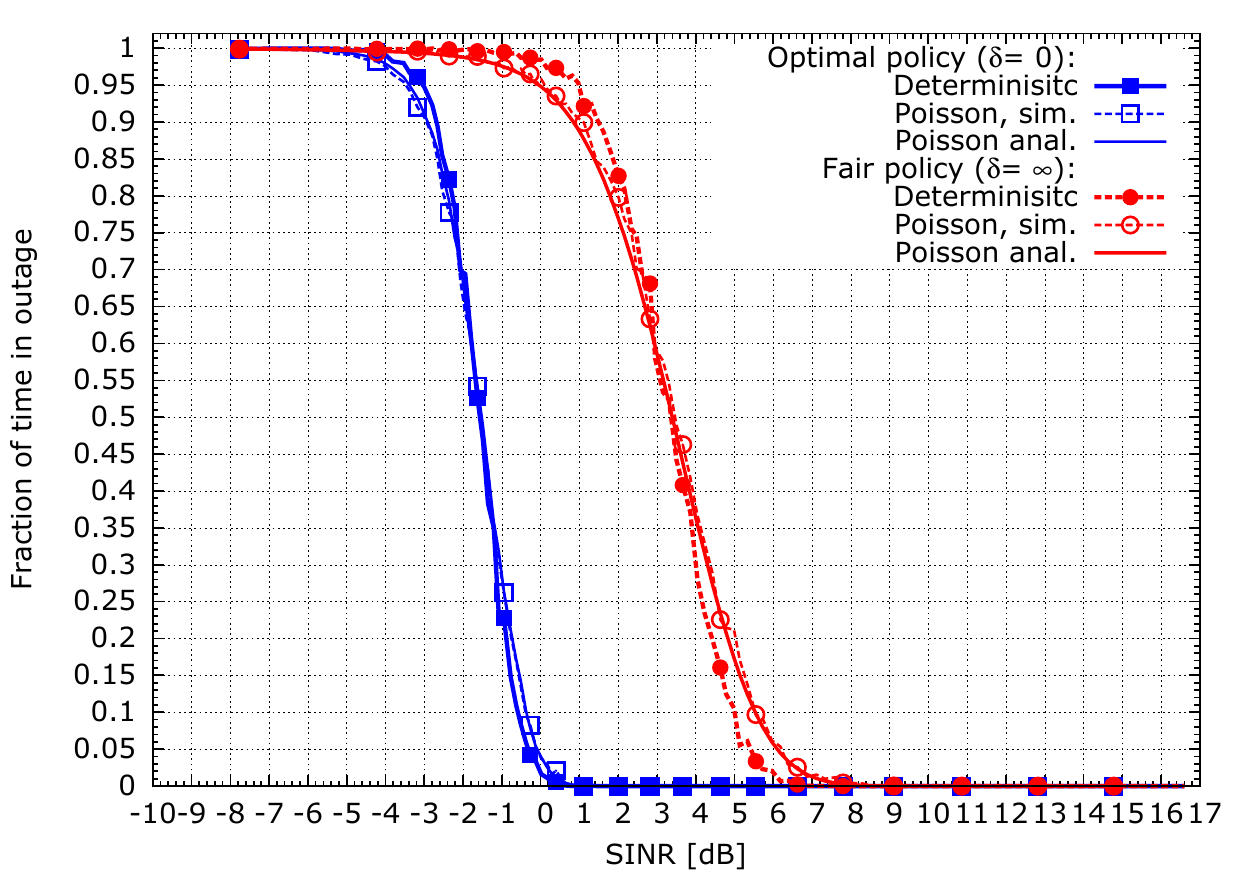}
\end{center}
\vspace{-4ex}
\caption{Impact of the deterministic arrival process (as compared to the Poisson
  one) on the mean 
fraction of the requested streaming time in 
 outage,  for the optimal and fair policy; traffic 600~Erlang/km${^2}$.
\label{f.det-duration600}}
\vspace{-4ex}
\end{figure}

\begin{figure}[t!]
\begin{center}
\includegraphics[width=0.7\linewidth]{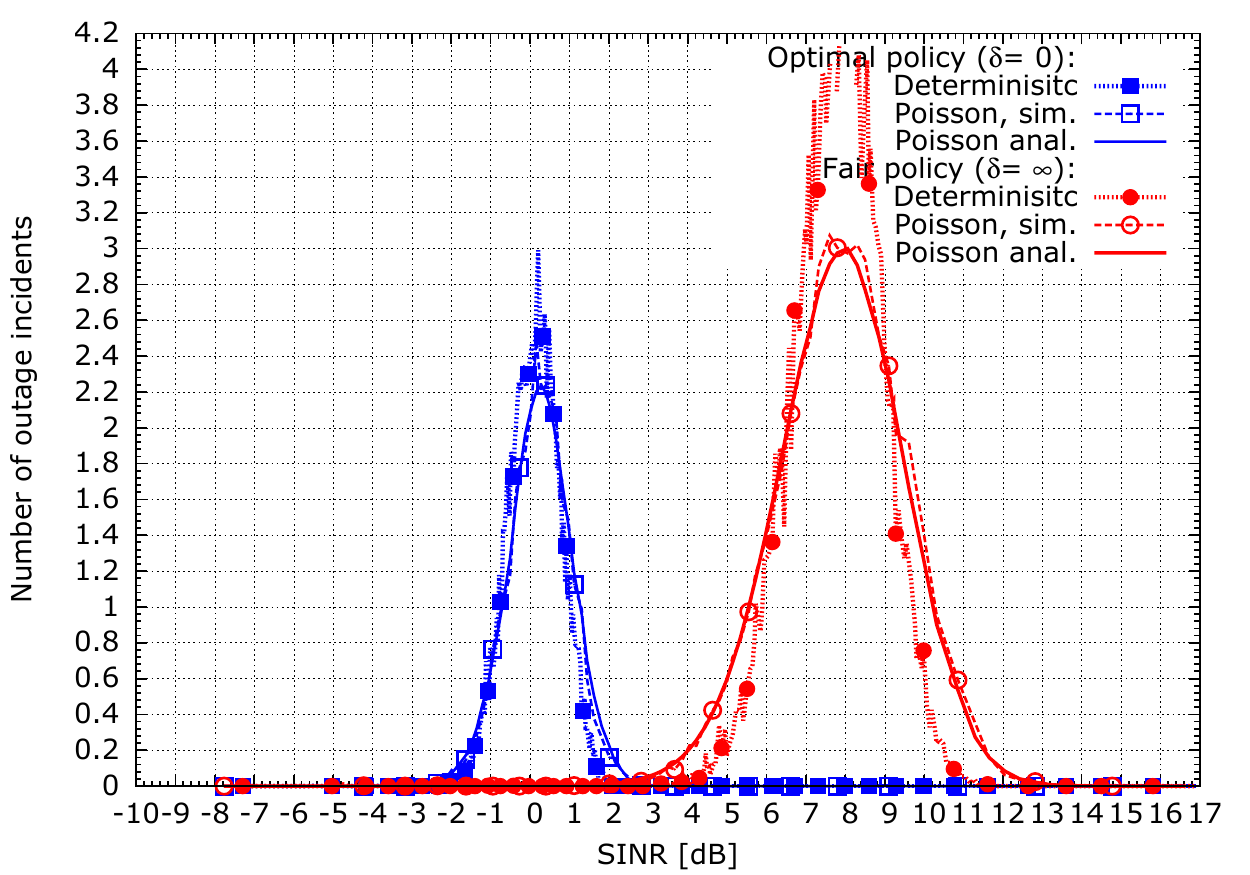}
\end{center}
\vspace{-4ex}
\caption{Impact of the deterministic arrival process (as compared to the Poisson
  one) on the mean number of outage incidents  for the optimal and
  fair policy; traffic 900~Erlang/km${^2}$.
\label{f.det-number900}}
\begin{center}
\includegraphics[width=0.7\linewidth]{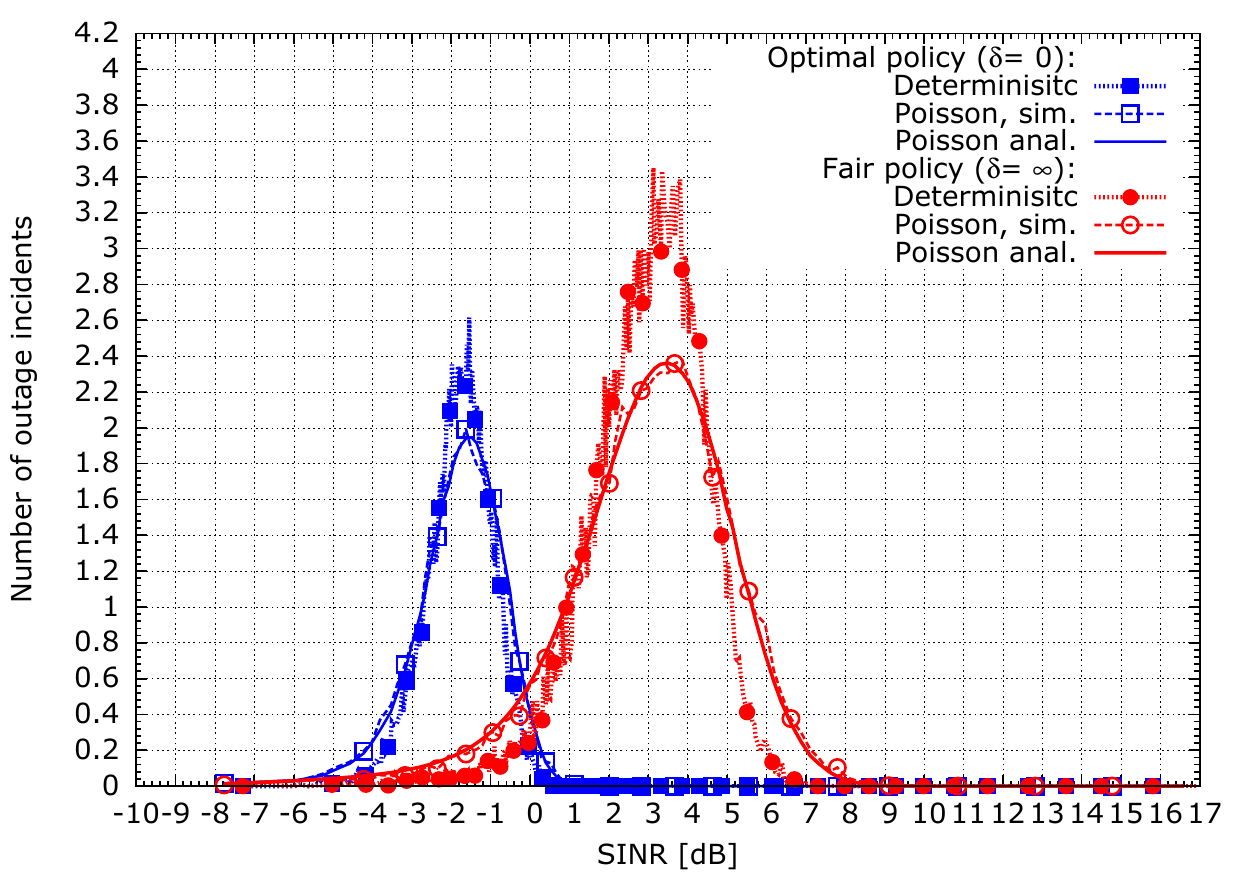}
\end{center}
\vspace{-4ex}
\caption{Impact of the deterministic arrival process (as compared to the Poisson
  one) on the mean number of outage incidents  for the optimal and
  fair policy; traffic 600~Erlang/km${^2}$.
\label{f.det-number600}}
\vspace{-4ex}
\end{figure}

\subsubsection{Impact of a non-Poisson-arrivals}
\label{sss.Deterministic}
Recall that the performance analysis of the  model presented in this paper
is insensitive to distribution of the requested streaming times. 
In this section we will briefly  
study the impact of a non Poisson-arrival assumption. In this regard we 
simulate the dynamics of the model with deterministic inter-arrival times
(with all other model assumptions as before) and estimate the mean 
fraction of time in outage $\mu_kD_k$ and mean number of outage incidents  $M_k$  for each class $k$.
For the comparison, as well as  for the validation of the theoretical work, we 
perform also the simulation of the model with Poisson arrivals. 
The results 
are plotted on Figures~\ref{f.det-duration900}, \ref{f.det-duration600} and 
\ref{f.det-number900}, \ref{f.det-number600}. 
Observe first that the simulations of the Poisson model confirm the results of the 
theoretical analysis. 
Regarding the impact of the deterministic inter-arrival times 
a~(somewhat expected) fact is that 
the optimal policy remains optimal regarding the fraction of time spent in
the outage and the number of outage incidents. 
Another, less evident, observation is that the deterministic
inter-arrivals (more regular than in the Poisson case) 
do {\em not} improve the situation for {\em all} classes of users.
In fact, users with small values of the SINR
have smaller fraction of time in outage under Poisson arrival assumption  
than in the deterministic one! This is different from what we can
observe for the blocking probability for the classical Erlang's loss model; cf
e.g.~\cite[Figure~8]{yu2013g}. Moreover, the
  deterministic arrivals increase the number of outage incidents for
intermediate values of the SINR and decrease for extreme ones, 
especially with the fair policy.
Concluding these observations one can say however, that the  differences
between Poisson and deterministic are not  very significant
and hence we  conjecture that the Poisson model can be used to approximate
a more realistic arrival traffic.

\section{Conclusions}
In this paper, a real-time streaming (RTS) traffic, as e.g. mobile
TV, is analyzed in the context of wireless cellular networks. An
adequate stochastic model is 
proposed to evaluate user  performance metrics, such as
frequency and number of interruptions during RTS  calls
in function of user radio conditions.
Despite some fundamental similarities to the classical
Erlang loss  model, a  new model was required 
for this type of service, where  the service denials 
are not definitive for a given
call, but only temporal -- having the form of, hopefully short, 
interruptions (outage) periods.
Our model  allows to take into account realistic
implementations of the RTS service, e.g. in the LTE  networks.
In this latter context, several numerical demonstrations are given,
presenting the quality of service metrics in  function of user radio
conditions.

\appendix[A general real-time streaming (RTS) model]
\label{s.model}
In this section we will present a general stochastic model for real-time
streaming. An instantiation of this model was used in the main body of the
paper to evaluate the real-time streaming in wireless cellular
networks. 
This model comprises 
Markovian, multi-class process of call arrivals
and their independent, arbitrarily distributed streaming times.
These calls are served by a server whose service capacity is limited.
Depending on numbers of calls of different classes present in the system,
the server may not be able to serve some classes of users.
If such a congestion occurs, these classes
are  temporarily denied the service, until the next call arrival 
or departure, when the situation is reevaluated. 
These service denial periods, called  outage periods, 
do not alter the call sojourn times in the system.
Our model  allows for a very general service (outage)
policy saying which classes of users are temporarily denied the service
due to insufficient service capacity. 
We will evaluate key characteristics of this model 
using the formalism of point processes and their Palm theory, often used in the 
modern approach to stochastic networking~\cite{BaccelliBremaud2003}.
Specifically, we are interested in  the intensity of outage incidents,
the mean inter-outage times  and the outage durations
of a given class, seen from the server perspective, as well as the
probability of outage at the arrival epoch, mean total time in outage
and mean number of outage incidents experienced by a typical user of a
given class. 
The  expressions developed for these characteristics
involve only stationary probabilities of the (free) traffic demand 
process, which in our case is a vector of independent Poisson random
variables. Recall that such a representation is possible e.g. for 
the well known Erlang-B formula, giving the blocking probability in
the classical (possibly multi-class) Erlang's loss model.
Indeed, our model can be seen as an extension of the classical loss
model, where the losses (i.e., service denials) are not definitive
for a given call, but only temporal --- having the form of outage
periods.

\subsection{Traffic demand}

\label{s.UsersLocations}
Consider $J\ge 1$ classes of users identified with calls.
We assume that users of class
$k\in\{1,\ldots,J\}$ arrive in time according
to a Poisson process $N_k=\{T^k_n:n\}$
\footnote{The time instants  $T_n^k$  are used only in the Appendix and should not be 
confused with  $T_k$  denoting in the main stream of the paper (and in the  proof of Proposition~\ref{e.Throughput-Erlang} at the end of the Appendix)
the  mean throughput of user in class $k$.} 
 with intensity $\lambda_{k}>0$ 
and stay in the
system for independent requested streaming 
times $W_n^k$ having some general distribution with mean 
$1/\mu_{k}<\infty$. All the results presented in what follows 
do not depend on the particular choice of the streaming time
distributions --- the property called in the queueing-theoretic context 
{\em insensitivity property}.  
Denote by $\tilde N_k=\{(T^k_n,W^k_n):n\}$ 
the process of arrival epochs  and streaming  times (call durations)
of users of class $k$. We assume that  
$\tilde N_k$ are independent across  $k=1,\ldots,J$.
Denote by $X_k(t)=\sum_n\ind_{[T^k_n,T^k_n+W_n^k)}(t)$
the number of users of class $k$ present in the
system at time~$t$ and let
$\bmX(t)=(X_1(t),\ldots,X_J(t))$; we call it the (vector of) user
configuration at time $t$. The stationary distribution $\pi$ of
$\bmX(t)$ coincides with  
the distribution of a vector
of independent Poisson random variables  $(X_1,\dots,X_J)$
with means $\E[X_k]:=\rho_{k}=\lambda_{k}/\mu_{k}$, 
$k=1,2,\ldots,J$.
We call $\rho_k$ the  \emph{traffic demand} of class $k$.

We adopt the usual convention 
for the numbering of the arrival epochs $T^k_{0}\leq0<T^k_{1}$. The
same convention is used with respect to all  point processes denoting 
some time epochs.

\subsection{Resource constraints and outage policy}
\label{ss.resource-policy}
For class $k=1,\ldots,J$, let a subset of user configurations %
$\calF_k\subset{\bar\bbN}^J$ be given, where
$\bar\bbN=\{0,1,\ldots\}$, 
such that all $X_k$ users of class $k$ present in the configuration 
$\bmX=(X_1\,\ldots,X_k,\ldots,\allowbreak X_J)$  are served if and only if
$\bmX\in\calF_k$ and no user of class $k$ is served (we say  it is in
{\em outage}) if
$\bmX\not\in\calF_k$.   We call $\calF_k$ the {\em $k\,$th class
(service)  feasibility set}.
Denote by $\pi_k=\pi(\calF_k)$ the probability
that the stationary configuration of users is in $k\,$th class
  feasibility set.

We assume that,  upon each
arrival or departure of a user, the system updates its
decision and, for any class $k$, it 
assigns the service to all users of class $k$
if the updated configuration of users is in $\calF_k$. All users 
of any  class $j$ for which the updated configuration is in
$\calF'_k={\bar\bbN}^J\setminus \calF_k$ will be placed in outage (at
least) until the next user arrival or departure. 

In what follows we will assume that 
no user departure can cause outage of any class of users
i.e.,  switch a given configuration from $\calF_k$ to
$\calF'_k$. (However a user departure 
may make  some class $j$ switch from  $\calF'_j$ to  $\calF_j$.)

Denote by $\tilde X_k(t):=X_k(t)\ind_{\calF_k}(\bmX(t))$ the
number of users of class $k$ {\em not in outage} at time~$t$.    
Denote by $\tilde\bmX(t)=(\tilde X_i(t),\ldots,\tilde X_J(t))$ the
configuration of users {\em not} in outage  at time~$t$.

\subsection{Performance metrics}
In what follows we will be interested in the following characteristics
of the model.
\subsubsection{Virtual system metrics}
During its time evolution, the user configuration $\bmX(t)$
alternates visits in the feasibility set $\calF_k$ and its complement
$\calF_k'$, for each class $k=1,\ldots,J$.
We are interested in the expected visit durations in theses sets
as well as the intensities (frequencies) of the alternations. 
More formally, for  each given $k=1,\ldots,J$, we define
the point process $B_k:=\{\tau^k_n:n\}$ of exit  epochs of
$\bmX(t)$ from $\calF_k$; i.e., all epochs $t$ such that 
$\Bigl(\bmX(t-),\bmX(t)\Bigr)\in \calF_k\times\calF_k'$
(with the convention $\tau^k_{0}\leq0<\tau^k_{1}$).
These are epochs when all users of class $k$ present in the system 
(if any)  have their service interrupted.

Denote by $\sigma'^k_n:=\sup\{t-\tau_n^k:\bmX(s)\in\calF'_k\; \forall
s\in[\tau_n^k,t)\}$ the duration of the $n\;$th visit of the
process $\bmX(t)$  in $\calF'_k$
and  by $\sigma^k_n:=\tau^k_{n+1}-\tau^k_n-\sigma'^k_n$ the duration of
the $n\;$th visit of the process $\bmX(t)$  in $\calF_k$.
We define for each class $k=1,\ldots,J$:
\begin{itemize}
\item {\em The intensity of outage incidents 
of class $k$}, i.e., the mean 
number of outage incidents of this class per unit of time
$$\Lambda_k:=\lim_{T\to\infty}\frac1T
\sum_n\ind_{[0,T)}(\tau_n^k)\,.$$
Obviously $\Lambda_k$ is also the intensity of entrance 
to the  $k\,$th class feasibility set  $\calF_k$.
\item {\em The mean service time between two outage incidents
of class  $k$}
$$\bar\sigma_k:=\lim_{N\to\infty}\frac1N
\sum_{n=1}^N\sigma_n^k\,.$$
\item {\em The mean outage duration of class  $k$}
$$\bar\sigma'_k:=\lim_{N\to\infty}\frac1N 
\sum_{n=1}^N\sigma'^k_n\,.$$
\end{itemize}
Note that the above metrics characterize a ``virtual'' quality of the 
service, since some visits in  $\calF_k$ and  $\calF'_k$
may occur  when there is no $k\,$th class user in the system
(in the latter case the outage of this class is not experienced by any user).

\subsubsection{User metrics}
We adopt now a user point of view on the system. We define for each
class $k=1,\ldots,J$:
\begin{itemize}
\item  {\em The probability of outage at the arrival epoch for  user
of class~$k$}
$$P_k=\lim_{N\rightarrow\infty}\frac1N
\sum_{n=1}^N\ind_{\calF'_k}(\bmX(T^k_n))\,.$$ 
\item  {\em The mean total time in outage of user of class $k$}
$$D_k=\lim_{N\rightarrow\infty}\frac1N
\sum_{n=1}^N\int_{[T^k_n,T^k_n+W^k_n)}\ind_{\calF'_k}(\bmX(t))\,\md t\,.$$ 
\item {\em The mean  number of outage incidents experienced by user of
    class $k$ after its arrival}
$$M_k=\lim_{N\rightarrow\infty}\frac1N
\sum_{n=1}^N\sum_{m}\ind_{(T^k_n,T^k_n+W^k_n)}(\tau^k_m)\,.$$
Note that eventual outage experienced  at the 
arrival of a given user is not counted in~$M_k$. 
The mean  total 
number of outage incidents (including possibly at the arrival
epoch) is hence  $P_k+M_k$. 
\end{itemize}

\section{Mathematical results}
\label{s.results}
For a given class $k=1,\ldots,J$, denote by
$\varepsilon_k=(0,\ldots,1,\ldots,0)\in\bar\bbN^J$ the unit vector
having its $k\,$th 
component equal to~1. Hence 
$\bmx+\varepsilon_k$ represents adding one user of class $k$ 
to the configuration
of users $\bmx\in\bar\bbN^J$. Denote by $\Pr$ 
the probability under which  $\{\bmX(t):t\}$ is stationary and by $\E$ the
corresponding expectation.
Recall that $\pi\{\bmx\in\cdot\} =\Pr\{\bmX(t)\in\cdot\}$ is the distribution
of the stationary configuration of users $\bmX(t)$ (it corresponds
to independent Poisson variables of mean $\rho_k$). 

\subsection{General results}
\label{ss.general-results}
We present first results regarding the virtual system metrics. These
results will be next used to evaluate the user metrics.

\begin{Lemma}
\label{l.Intensity}\label{l.Lambda}
The intensity of outage incidents 
of class $k$ 
is $\Pr$-almost surely equal to
$$\Lambda_{k}
=\sum_{j=1}^{J}\lambda_{j}\pi\left\{\bmx\in\calF_k,\bmx+\varepsilon_{j}%
\in\calF'_k\right\}\,\qquad k=1,\ldots,J.$$
\end{Lemma}

\begin{proof}
Let $N=\sum_{j=1}^JN_j$\ be the point process\ counting the 
arrival  times of users of all classes. By independence, $N$  is the Poisson
point process of intensity $\lambda=\sum_{j=1}^J\lambda_j$. 
Then, by the ergodicity of the process $\{\bmX(t):t\}$
and the fact that the exits from $\calF_k$ can take place only at some
user arrival epoch we have by the Campbell's formula
\citep[cf. e.g.][Equation~(1.2.19)\footnote{with
  $Z_n:=(\bmX(T_n-),\bmX(T_n))$ and
  $f(t,z)=\ind_{[0,1)}(t)\ind_{\calF_k\times\calF'_k}(z)$}]%
  {BaccelliBremaud2003},
$$
\Lambda_{k} 
=\E\left[  \int_{[0,1)}
\ind_{\calF_k\times\calF'_k}\left(
\bmX\left(  t-\right)  ,\bmX\left(  t\right)  \right)  N\left(  \md
t\right)  \right] =\lambda
\Pr_{N}^0\left\{\bmX(0-)\in\calF_k,\bmX(0)\in\calF'_k\right\}
\,,$$
where $\Pr_{N}^{0}$ designates the Palm
probability associated to $N$ (which is, roughly speaking, the
conditional probability given  an arrival at time $0$).
By  PASTA (Poisson Arrivals See
Time Averages) property~\citep[cf.][Equation~(3.3.4)]{BaccelliBremaud2003}
the configuration of users $\bmX(0-)$ under  $\Pr_{N}^{0}$
has distribution $\pi$. Moreover, $\bmX(0)=\bmX(0-)+\varepsilon_{\xi}$
where $\xi\in\{1,\ldots,J\}$ is under  $\Pr_{N}^{0}$ independent of
$X(0-)$ and takes value $j$ with probability $\lambda_j/\lambda$.
This completes the proof.
\end{proof}

\begin{Lemma}
The mean service time between two outage incidents and 
the mean outage duration of class  $k$
are  $\Pr$-almost surely  equal to,  respectively,
\[
\bar\sigma_k=\frac{\pi\left(  \calF_k\right)  }{\Lambda_{k}},\quad
\bar\sigma'_k:=\frac{\pi(  \calF'_k)  }%
{\Lambda_{k}}%
\,\quad \qquad k=1,\ldots,J,\]
where $\Lambda_{k}$\ is given in Lemma~\ref{l.Intensity}.
\end{Lemma}

\begin{proof}
First we prove the expression for $\bar\sigma_k$.
By ergodicity 
$\bar\sigma_k=\E_{B_k}^{0}\left[  \sigma^k_{0}\right]$  $\Pr$-almost surely,
where $\E_{B_k}^{0}$\ designates the expectation with respect to the Palm
probability associated to $B_k$,
and 
$\E_{B_k}^{0}\left[  \tau^k_{0}\right]  =1/\Lambda_{k}$;
\citep[see e.g.][Equation~(1.6.8) and Equation~(1.2.27)]{BaccelliBremaud2003}.
Applying the mean value formula~\citep[see][Equation~(1.3.2)\footnote{
with $Z_k\left(  t\right)  =\ind_{\calF_k}\left(\bmX\left(  t\right)
\right)$}]{BaccelliBremaud2003}
 we get 
$\pi(\calF_k)
=\Lambda_{k}\E_{B_k}^{0}\left[ \sigma^k_{0}\right]$,
which completes the proof of the expression for $\bar\sigma_k$.
For the other expression, note by the definition of the sequence
$\sigma^k_n,\sigma'^k_n$ and $\tau^k_n$ that $\Pr$-almost surely,
$$\bar\sigma'_k=\E_{B_k}^{0}\left[  \sigma'^k_{0}\right]=
\E_{B_k}^{0}\left[\tau^k_1-\sigma^k_{0}\right]=
\frac1{\Lambda_k}-\frac{\pi(\calF_k)}{\Lambda_k}=
\frac{\pi(\calF'_k)}{\Lambda_k}\,,$$
which completes the proof.
\end{proof}

\begin{Prop}\label{p.InterruptionProbability}
The probability of outage at the arrival epoch for  user
of class~$k$
is equal to 
\begin{equation}
P_k=\pi\left\{ \bmx+\varepsilon_{k}\in
\calF'_k\right\} \qquad k=1,\ldots,J\,  \label{e.InterruptionProbability}%
\end{equation}
$\Pr$-almost surely.
\end{Prop}

\begin{proof}
By ergodicity we  have 
$P_k=\Pr_{N_k}^0\left\{\bmX(0)\in\calF'_k\right\}$,
where  $\Pr_{N_k}^{0}$ designates the Palm
probability associated to $N_k$ (arrival process of the users of class
$k$). By  PASTA property
the configuration of users $\bmX(0-)$,
just before arrival of the user of class $k$ at time $0$,
has distribution $\pi$. Once the user
enters the system, the users configuration becomes
$\bmX(0-)+\varepsilon_{k}$, whence the result.
\end{proof}

\begin{Prop}\label{p.InterruptionDuration}%
The mean total time in outage of user of class $k$
is $\Pr$-almost surely equal to 
$$
D_k=\frac{1}{\mu_{k}}\pi\left\{\bmx+\varepsilon
_{k}\in\calF'_k\right\}\, \qquad k=1,\ldots,J
\,. 
$$
\end{Prop}

\begin{proof}
Again using the ergodicity of $\left\{  \bmX\left(  t\right)  \right\}
$  we can write
$$D_k=\E_{N_k}^0\left[\int_{[0,W^k_0)}\ind_{\calF'_k}(\bmX(t))
\,\md t\right]\,.$$ 
Denote by $\bmY(t):=\bmX(t)-\varepsilon_k\ind_{[T^k_0,T^k_0+W^k_0)}(t)$ 
the process of configurations of users other than the user number 0 of
class $k$ (which arrives at time~$0$ under $\E_{N_k}^0$).
By Slivnyak theorem~\citep[see e.g.][Theorem~1.13]{BaccelliBlaszczyszyn2009T1}
the distribution of the process $\{\bmY(t):t\}$ under $\Pr_{N_k}^0$ is the same 
as this of $\{\bmX(t): t\}$ under $\Pr$. 
Using the fact that  $W^k_0$ and $\bmY(t)$ are independent under
$\Pr_{N_k}^0$ with $\E_{N_k}^0[W^k_0]=1/\mu_k$ we obtain
$$D_k=
\int_0^\infty \E_{N_k}^0\left[\ind_{[0,W^k_0)}(t)
\ind_{\calF'_k}(\bmY(t)+\varepsilon_k)\right]\,\md t=
\frac 1\mu_k\pi\left\{\bmx+\varepsilon_k\in\calF'_k)\right]\,,
$$
which completes the proof. 
\end{proof}

\begin{Prop}
\label{p.NbInterruptions}
 The mean  number of outage incidents experienced by user of
    class $k$ after its arrival is
$\Pr$-almost surely equal to 
\begin{equation}
M_k=\frac{1}{\mu_{k}}\sum_{j=1}^{J}\lambda
_{j}\pi\left\{\bmx+\varepsilon_{k}\in\calF_k,\bmx+\varepsilon_{k}%
+\varepsilon_{j}\in\calF'_k\right\}, 
\qquad k=1,\ldots,J\,.  \label{e.NbInterruptions}%
\end{equation}
\end{Prop}

\begin{proof}
Again using the ergodicity of $\left\{  X\left(  t\right)  \right\}  $ we
know  that, $\Pr$-almost surely,
\[M_k=
\E_{N_k}^{0}\left[\int_{(0,W^k_0)}\,B_k(\md t)\right]\,.
\]
Using the fact that  $W^k_0$ and $\bmY(t)$ are independent under
$\Pr_{N_k}^0$ with $\E_{N_k}^0[W^k_0]=1/\mu_k$ we obtain
$$M_k=\E_{N_k}^0\left[B^*_k(0,W^k_0)\right]=\frac{\Lambda^*_k}{\mu_k}$$
 where  
$B^*_k=:\{\tau^{*k}_n:n\}$ is the point process of  exit  epochs of
$\bmX(t)$ from $\calF^*_k=\{\bmx:\bmx+\varepsilon_k\in\calF_k\}$
and $\Lambda^*_k$ its intensity. Using Lemma~\ref{l.Lambda} 
with $\calF_k$ replaced by $\calF^*_k$ concludes the proof.
\end{proof}

We will now prove the result regarding the throughput of the typical
call of class~$k$.
\begin{proof}[Proof of Proposition~\ref{e.Throughput-Erlang}]
We have
$$T_k=T_k^{\delta}=\mu_k\E^0_{N_k}\left[\int_{[0,W_0^k)}
    r_k\ind(\bmX(t)\in\calF_k^\delta)
+r_k^{'\delta}(\bmX(t))\ind(\bmX(t)\not\in\calF_k^\delta)\,\md t\right]\,.$$
It is easy to see, as in the proof of  Proposition~\ref{p.InterruptionDuration},
that 
$T_k=r_k\pi\left\{\bmx+\varepsilon_{k}\in\calF_k^\delta\right\}+T_k'$, where
\begin{equation}
T_k'=\E\left[r_k^{'\delta}(\bmX(t)+\varepsilon_{k})\ind((\bmX(t)+\varepsilon_{k})\not\in\calF_k^\delta)\right]
\end{equation}
is the part of the throughput  obtained by 
user of class $k$ during its outage time.
\end{proof}

\bibliographystyle{IEEEtranTCOM}
\addcontentsline{toc}{section}{References}

\end{document}